\title{Emit As You Go: Enumerating Edges of a Spanning Tree}
\author{Katrin Casel}{Humboldt Universität zu Berlin, Germany}{Katrin.Casel@hu-berlin.de}{https://orcid.org/0000-0001-6146-8684}{}
\author{Stefan Neubert}{Hasso Plattner Institute, University of Potsdam, Germany}{Stefan.Neubert@hpi.de}{https://orcid.org/0000-0001-9148-6592}{}
\authorrunning{K. Casel and S. Neubert}
\keywords{solution part enumeration, preprocessing vs.~delay, spanning tree}
\tikzstyle{vertex} = [draw, circle, minimum width=1.5em]
\tikzstyle{directed} = [draw, ->, >=Latex]
\newcommand{\cmark}{\ding{51}}
\newcommand{\xmark}{\ding{55}}
\pgfplotsset{compat=1.5}
\newcommand{\maxdeg}{\Delta}
\newcommand{\maxoutdeg}{\Delta^+}
\newcommand{\maxindeg}{\Delta^-}
\newcommand{\avgdeg}{\overline{\Delta}}
\newcommand{\avgoutdeg}{\overline{\Delta^+}}
\newcommand{\avgindeg}{\overline{\Delta^-}}
\newcommand{\outdeg}{\deg^+}
\newcommand{\indeg}{\deg^-}
\newcommand{\totaltime}{\mathcal{T}}
\newcommand{\prefixtime}{\mathcal{T}_k}
\newcommand{\totalspace}{\mathcal{S}}
\definecolor{plotcolor}{HTML}{007f6f}
\begin{document}

\maketitle

\begin{abstract}
	Classically, planning tasks are studied as a two-step process: plan creation and plan execution.
	In situations where plan creation is slow (for example, due to expensive information access or complex constraints), a natural speed-up tactic is interleaving planning and execution.
	We implement such an approach with an enumeration algorithm that, after little preprocessing time, outputs parts of a plan one by one with little delay in-between consecutive outputs.
	As concrete planning task, we consider efficient connectivity in a network formalized as the minimum spanning tree problem in all four standard variants: (un)weighted (un)directed graphs.
	Solution parts to be emitted one by one for this concrete task are the individual edges that form the final tree.

	We show with algorithmic upper bounds and matching unconditional adversary lower bounds that efficient enumeration is possible for three of four problem variants;
	specifically for undirected unweighted graphs (delay in the order of the average degree), as well as graphs with either weights (delay in the order of the maximum degree and the average runtime per emitted edge of a total-time algorithm) or directions (delay in the order of the maximum degree).
	For graphs with both weighted and directed edges, we show that no meaningful enumeration is possible.

	Finally, with experiments on random undirected unweighted graphs, we show that the theoretical advantage of little preprocessing and delay carries over to practice.
\end{abstract}

\section{Introduction}
\label{sec:introduction}

In many complex planning settings, such as path finding for robots in a storage facility, the combined time of planning and execution determines the overall efficiency.
A natural way of optimizing this is to start execution before fully finishing planning \cite{zhangPlanningExecutionMultiAgent2024, neubertIncrementalOrderingScheduling2024}.
This approach generalizes to all kinds of multi-step processes, in which the output of an early step is needed to start working on a later one.
In fact, with large input instances it can even be beneficial to employ an algorithm with worse total-time, if that algorithm produces the solution in the form of many \textit{solution parts} and (while still running) emits them for following steps to process \cite{lindnerHiLiveRealtimeMapping2017}.

But how much speedup can actually be gained by this approach?
We tackle this question for one of the most fundamental problems on networks: cost-efficient connectivity.
Be it networks for communication, electricity, transport or water supply -- both for planning problems and for the analysis of existing infrastructure it is a central task to connect every node to all other nodes by a Minimum Spanning Tree (MST).
Besides being useful as output on their own, MSTs are also used as input for many complex algorithmic tasks, e.\,g. for Coverage Path Planning \cite{tanComprehensiveReviewCoverage2021}, or in data analytics for graph-based clustering \cite{zahnGraphTheoreticalMethodsDetecting1971,gagolewskiClusteringMinimumSpanning2024}, and for image segmentation \cite{felzenszwalbEfficientGraphBasedImage2004}.

Naturally, there is an extensive history of research on the complexity of computing a \textit{(minimum) spanning tree} for (un)directed and (un)weighted graphs -- that is if by \textit{complexity} one refers to the \textit{total-time complexity} of the problem:
How many computation steps does it take in total to produce a complete solution as output?
With the above motivation of \textit{starting next steps early}, recently several core algorithmic problems have been analyzed using terminology from \textit{enumeration complexity}~\cite{strozeckiEnumerationComplexity2019}:
How much \textit{preprocessing} time does it take at most to produce a first part of the solution, and what is the worst case \textit{delay} in-between two consecutive solution parts that are emitted?
(Note that in contrast to classical enumeration this question does not ask to enumerate all solutions to an input instance, but to enumerate all parts of a single solution.)
Ideally, an algorithm does not need to spend any time on preprocessing but is able to immediately emit solution parts with little delay.
On the other end of the spectrum are problems for which the time to compute even the first few parts of a solution is in the same order as the time required to solve the problem completely.
The goal therefore is to understand possible tradeoffs between preprocessing time and delay.

\begin{figure}[t]
	\centering
	\begin{tikzpicture}
		\begin{loglogaxis}[
			width=215pt,
			height=110pt,
			xlabel={Number of vertices},
			ylabel={CPU time in ns},
			ymin=100,
			ymax=10e11,
			legend entries={
				e-first,
				e-total,
				prim-total,
			},
			legend style={at={(0.5,1)},anchor=south,draw=none},
			legend columns=3,
		]
			\addplot[plotcolor, mark=diamond*, mark size=1.4] table
				[x=size,y=first_output_avg, col sep=comma] {./experiment_data/aggregated_0.25.enum-prim.csv};
			\addplot[plotcolor, mark=square*, mark size=1.4] table
				[x=size,y=total_time_avg, col sep=comma] {./experiment_data/aggregated_0.25.enum-prim.csv};

			\addplot[black, mark=square, mark size=1.4] table
			[x=size,y=total_time_avg, col sep=comma] {./experiment_data/aggregated_0.25.total-prim.csv};

			\addplot[draw=none, name path=e-first-min] table
				[x=size,y=first_output_lower_quartile, col sep=comma] {./experiment_data/aggregated_0.25.enum-prim.csv};
			\addplot[draw=none, name path=e-first-max] table
				[x=size,y=first_output_upper_quartile, col sep=comma] {./experiment_data/aggregated_0.25.enum-prim.csv};
			%\addplot[plotcolor, fill opacity=0.3] fill between [of=e-first-min and e-first-max];

			\addplot[draw=none, name path=e-total-min] table
				[x=size,y=total_time_lower_quartile, col sep=comma] {./experiment_data/aggregated_0.25.enum-prim.csv};
			\addplot[draw=none, name path=e-total-max] table
				[x=size,y=total_time_upper_quartile, col sep=comma] {./experiment_data/aggregated_0.25.enum-prim.csv};
			%\addplot[plotcolor, fill opacity=0.3] fill between [of=e-total-min and e-total-max];

			\addplot[draw=none, name path=p-total-min] table
				[x=size,y=total_time_lower_quartile, col sep=comma] {./experiment_data/aggregated_0.25.total-prim.csv};
			\addplot[draw=none, name path=p-total-max] table
				[x=size,y=total_time_upper_quartile, col sep=comma] {./experiment_data/aggregated_0.25.total-prim.csv};
			%\addplot[black, fill opacity=0.3] fill between [of=p-total-min and p-total-max];
		\end{loglogaxis}
	\end{tikzpicture}
	\caption{Time to first output and total time of MST enumeration (e) compared to the runtime of Prim's MST algorithm.}
	\label{fig:intro-experiments}
\end{figure}
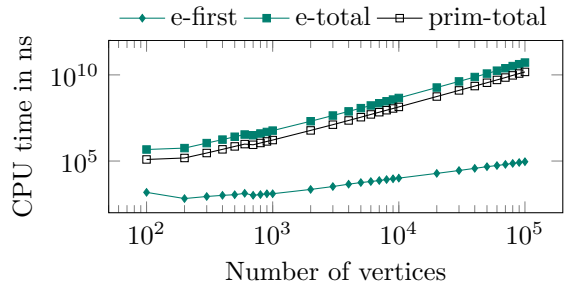

Among the problems studied with this perspective are incremental sorting~\cite{paredesOptimalIncrementalSorting2006,navarroSortingHeapsMinimum2010}, shuffling~\cite{carmeliAnsweringUnionsConjunctive2022}, computation of all pairs shortest distances in a graph~\cite{caselShortestDistancesEnumeration2024} and finding a topological ordering in a directed acyclic graph for solving scheduling problems~\cite{neubertIncrementalOrderingScheduling2024}.
Interestingly, one of the applications of incremental sorting is speeding up total-time MST algorithms; improved also as \emph{Filter Kruskal} in~\cite{osipovFilterKruskalMinimumSpanning2009}.
While these algorithms \textit{employ} the incremental idea for sub problems, we go one step further and discuss MST \textit{itself} in terms of enumerating solution parts.

\paragraph*{Our Contribution}
We present a family of efficient algorithms that enumerate edges of a spanning tree for a given input graph.
For MSTs of weighted undirected graphs, we experimentally confirm in \Cref{sec:experiments} that our enumeration approach leads to a smaller time to a first emitted solution part and small delay in practice, with only a constant overhead in the total time (see \Cref{fig:intro-experiments}).

We complement the algorithmic results with lower bounds to almost completely characterize the complexity of enumerating the edges of a spanning tree for a connected graph with $n$ vertices and $m$ edges.
\Cref{tab:results} summarizes our results for all typical edge types:

\begin{itemize}
	\item \textbf{Undirected unweighted} edges allow for a delay in the order of the average degree~$\avgdeg$; we derive a matching lower bound from the best possible total-time algorithm (\Cref{sec:uust}).
	\item For \textbf{undirected weighted} edges we show a lower bound in the order of the maximum degree $\maxdeg$ on either preprocessing time or delay.
	We also present algorithms that, for small $\maxdeg$ or with $O(n)$ preprocessing time, homogeneously spread out the computation time $\totaltime$ of a total-time algorithm for minimum spanning trees and emit the $n-1$ solution parts with delay in $O(\frac{\totaltime}{n})$ (\Cref{sec:umst}).
	\item Given a \textbf{directed unweighted} graph along with the root for a spanning tree, computing a shortest path tree with delay in the order of the maximum out-degree $\maxoutdeg$ \cite{caselShortestDistancesEnumeration2024} solves the problem with optimal delay, as we show with a matching lower bound (\Cref{sec:dust})
	\item Without given root or when the graph has \textbf{directed weighted} edges, no algorithm can emit any part of a solution after less than $\Omega(m)$ time.
	For unweighted edges this already matches the total time needed to solve the instance.
	But also the small remaining gap to existing total time algorithms for the weighted case (see below for related work) rules out any meaningful enumeration (\Cref{sec:dmst}).
\end{itemize}

\begin{table*}[t]
	\caption{%
		Our results for different problem variants.
		We abbreviate \textit{preprocessing} with prep.
		The symbol $\totaltime$ represents the runtime of an (optimal) MST total-time algorithm.
		An \xmark{} marks variants where lower bounds rule out any meaningful enumeration.
	}
	\label{tab:results}
	\centering
	\begin{tabularx}{\textwidth}{lXX}
		\toprule
		Problem Variant & Lower Bounds & Upper Bounds \\ \midrule
		Undirected Unweighted & \hyperref[thm:uust-lower]{$\Omega(m)$ prep. or $\Omega(\avgdeg)$ delay} & \hyperref[thm:uust-upper]{$O(\avgdeg)$ delay w/o prep.} \\
		Undirected Weighted & \makecell[tl]{\hyperref[thm:mst-maxdeg-lower]{$\Omega(\maxdeg)$ prep. or $\Omega(\maxdeg)$ delay}\\ \hyperref[cor:mst-algo-lower]{$\Omega(\totaltime)$ prep. or $\Omega(\frac{\totaltime}{n})$ delay}} & \makecell[tl]{\hyperref[thm:mst-maxdeg-upper]{$O(\max(\maxdeg,\frac{\totaltime}{n}))$ delay w/o prep.}\\ \hyperref[thm:mst-algo-upper]{$O(\frac{\totaltime}{n})$ delay w/ $O(n)$ prep.}} \\
		Directed Unweighted & & \\
		\qquad w/ given root & \hyperref[thm:dste-with-r-lower]{$\Omega((\maxoutdeg)^2)$ prep. or $\Omega(\maxoutdeg)$ delay} & \hyperref[cor:dste-with-r-upper]{$O(\maxoutdeg)$ delay w/o prep.} \\
		\qquad w/o given root & \hyperref[thm:dste-lower]{$\Omega(m)$ prep. or delay} & {\xmark} \hyperref[thm:dst-upper]{\textit{$O(m)$ total time}} \\
		Directed Weighted & \hyperref[thm:mdste-lower]{$\Omega(m)$ prep. or delay} & {\xmark} \textit{$O(m + n\log(n))$ total time \cite{gabowEfficientAlgorithmsFinding1986}} \\ \bottomrule
	\end{tabularx}
\end{table*}

We build upon the model and techniques introduced in \cite{caselShortestDistancesEnumeration2024}.
We formalize these in \Cref{sec:preliminaries} where we also describe the general form of the algorithms we develop.

Our algorithms usually start with a targeted search for a set of easy-to-find solution parts.
As this is then extended to a full solution by tailored versions of existing algorithms for finding spanning trees, it makes sense to summarize the previous work in this field:

A spanning tree for an unweighted graph can be computed in $O(m + n)$, e.\,g. with a depth first search from a given root \cite{cormenIntroductionAlgorithms2022}; we show in \Cref{thm:dst-upper} how to cover the case without given root.

There are numerous approaches to solve the \textit{minimum weight} spanning tree problem for \textit{undirected} edge-weighted graphs (and the equivalent problem of maximizing the tree's total edge weight).
While we refer to \cite{grahamHistoryMinimumSpanning1985,nesetrilFewRemarksHistory1997,ericksonAlgorithms2019} for a comprehensive account of the many (re)discoveries, we want to mention the approaches for the comparison-based computation model coined by Borůvka (\cite{boruvkaJistemProblemuMinimalnim1926}; $O(m\log(n))$), Jarník, Prim, and Dijkstra (\cite{jarnikJistemProblemuMinimalnim1930,primShortestConnectionNetworks1957,dijkstraNoteTwoProblems1959}; $O(m+n\log(n))$ with Fibonacci heaps \cite{fredmanFibonacciHeapsTheir1987}), Kruskal (\cite{kruskalShortestSpanningSubtree1956}; $O(m\log(n))$), Chazelle (\cite{chazelleMinimumSpanningTree2000}; $O(m\alpha(m,n))$), and Pettie and Ramachandran (\cite{pettieOptimalMinimumSpanning2002}; unknown, but optimal complexity).
For the RAM model, Fredman and Willard \cite{fredmanTransdichotomousAlgorithmsMinimum1994} developed a linear-time algorithm.

Previous work on \textit{directed} minimum spanning trees (sometimes called \enquote{branchings} or \enquote{arborescences}), centers around the so-called Edmonds' algorithm, its (re)discoveries and a series of implementation improvements with a complexity of $O(m + n\log(n))$ in the comparison-based model and $O(m\log\log(n))$ in the RAM model \cite{chuShortestArborescenceDirected1965,edmondsOptimumBranchings1967,bockAlgorithmConstructMinimum1971,karpSimpleDerivationEdmonds1971,tarjanFindingOptimumBranchings1977,cameriniNoteFindingOptimum1979,gabowEfficientAlgorithmsFinding1986,mendelsonMeldingPriorityQueues2006}.

While the majority of the research has concentrated on the total-time model, there are several other takes on how to provide input data, expect output data, and measure complexity for computing spanning trees.
\textit{Dynamic} algorithms have to maintain a (minimum) spanning tree for a graph undergoing edge insertions and deletions \cite{holmPolylogarithmicDeterministicFullydynamic2001}.
Such algorithms have to make changes to the existing output, meaning they cannot be used for the problem of enumerating spanning tree edges of a static graph.
The same applies to the \textit{reconfiguration} model, where an algorithm has to transform one spanning tree into another while preserving the spanning tree properties \cite{itoComplexityReconfigurationProblems2011}.
Closer to our setting is a formalization as online problem (e.\,g.~\cite{bergOnlineMinimumSpanning2023,megowPowerRecourseOnline2012}).
There, same as for enumeration, algorithms have to produce the final solution piece by piece.
In the online setting, however, also the input arrives in pieces which only allows for approximate solutions or requires \textit{recourse actions}.
In contrast to this, enumeration algorithms have access to the complete input and are able to produce an exact solution.

\section{Preliminaries and Techniques}
\label{sec:preliminaries}

We start by introducing graph notation, defining spanning tree problems and the enumeration framework, and explaining the techniques we use for proving upper and lower bounds.

\subsection{Graph Notation and Data Structure}
\label{subsec:notation}

Let $G = (V, E)$ be a (directed or undirected) graph with vertices $V = \{1, \ldots, n\}$ and $|E| = m$ edges.
We will in this work assume that input graphs are connected, as spanning trees only exist for connected graphs.
Note that this implies $m \geq n-1$ and allows us to shorten $O(m+n)$ to $O(m)$ in asymptotic complexity analysis.
Besides the unweighted case, we also consider graphs with edge weights given as function $w\colon E \to \mathbb{R}$.

We denote the degree of a vertex $v$ by $\deg(v)$, the average degree in the graph by $\avgdeg = \frac{1}{n} \cdot \sum_{v\in V} \deg(v)$, and the maximum degree by~$\maxdeg$.
In case $G$ is directed, we write $\outdeg(v)$ and $\indeg(v)$ for the out- and in-degree of $v$, $\maxoutdeg$ and $\maxindeg$ for the maximum out- and in-degree, and $\avgoutdeg$ and $\avgindeg$ for the average out- and in-degree.

Graphs are always given in the form of adjacency lists.
For a graph $G$ and a vertex $v$ we denote the adjacency list of $v$ as $G[v]$.
We assume that the length of each adjacency list (thus: the degree of each vertex) can be queried in constant time, and that edge weights (if present) are stored alongside the respective entry in the adjacency lists.

Our analyses assume the word RAM model with word size in $\Omega(\log(n))$; enough to store a reference to any vertex or edge in a constant number of cells that can be read and written in constant time.
For edge weights we only assume that they can be compared in constant time.
Note that we use existing algorithms for computing minimum spanning trees as black-boxes; your choice of algorithm there might imply additional requirements on the machine model.

\subsection{Spanning Trees}

A \textit{forest} of an undirected graph $G = (V,E)$ is a subgraph $F = (V, E')$ of $G$ with $E' \subseteq E$ that is acyclic.
A \textit{spanning tree} (ST) of $G$ is a connected forest $T = (V, E')$ of $G$ with $|E'| = n-1$.
With additional edge weights $w$ a \textit{minimum spanning tree} (MST) of $G$ is a spanning tree that minimizes $\sum_{e \in E'} w(e)$.

An \textit{out-branching} of a directed graph $G = (V,E)$ is an acyclic subgraph $B = (V, E')$ of $G$ with $E' \subseteq E$  in which each vertex $v$ has $\indeg(v) \leq 1$.
If an out-branching is weakly connected (thus: it is connected if one ignores edge directions), we have $|E'| = n - 1$ and thus there is exactly one vertex $r$ with $\indeg(r) = 0$.
We call such an out-branching $T$ a \textit{directed spanning tree} (DST) of $G$ rooted in $r$.
With additional edge weights $w$ a \textit{minimum directed spanning tree} (MDST) of $G$ is a directed spanning tree that minimizes $\sum_{e \in E'} w(e)$.

\subsection{Enumeration of Solution Parts}

Given an (un)directed (un)weighted input graph $G$, an \textit{enumeration algorithm} has to, for some (minimum) (directed) spanning tree $T$ of $G$, emit each edge of $T$ exactly once.
We analyze the time complexity of such an enumeration algorithm in terms of its worst case \textit{delay}, that is the maximum time the algorithm needs to emit the respective next edge, and additional \textit{preprocessing time} the algorithm can spend before emitting the first edge.
The tradeoff between little preprocessing time and small delay is the core of our analyses.

For sufficiently small preprocessing time, an enumeration algorithm might not be able to initialize complex data structures before having to emit the first solution parts.
We follow the approach presented in \cite{caselShortestDistancesEnumeration2024} to allow for \textit{lazy-initialized} memory that can be reserved within constant time and specify for all our results how much lazy-initialized memory an algorithm needs along with its total space complexity.

\subsection{Amortized Analysis for Upper Bounds}
\label{subsec:upper-techniques}

When designing and analyzing an enumeration algorithm it is important to separate \emph{computing} solution parts from \emph{emitting} them.
While an algorithm might fix parts of the solution after irregular time intervals, it can possibly shrink the maximum output delay by holding back some solution parts to emit later.
This transformation was first described in \cite{goldbergEfficientAlgorithmsListing1991} and comes with the downside of having to store the held-back data.
However, with the spanning tree problems at hand, an algorithm has to store at most $n-1$ edge references, which does not increase its asymptotic space complexity.

Our algorithms hold back solution parts for later emission by storing them in a linked list called \textit{solution queue}.
The key aspect of an algorithm's description is how this queue is filled with solution parts; complemented by an analysis showing that the solution queue never runs empty when a solution part has to be emitted.
Note that we expect an algorithm to actively emit held-back solution parts.
While we do not explicitly write out the instructions on when to dequeue and emit a solution part from the solution queue, we do have to explain how an algorithm is able to compute a lower bound on the aspired preprocessing and delay in order to be able to emit the next solution part after a suitable number of computation steps.

We analyze the time complexity of our algorithms with the banker's view on amortization~\cite{tarjanAmortizedComputationalComplexity1985}:
An algorithm holds a number of \textit{credits}.
Assume we claim that an algorithm solves a spanning tree enumeration problem with preprocessing in $O(p)$ and delay in $O(d)$.
Initially, for some implementation-specific constant $c$, the algorithm starts with $c \cdot (p+d)$ credits.
For every solution part it emits after $\Theta(d') \subseteq O(d)$ computation steps, it receives additional $c \cdot d'$ credits.
Each credit pays for a fixed constant number of computation steps, and every step has to be paid for.
To prove our claim we thus have to show that the algorithm's account balance cannot become negative.
With this technique in mind, our algorithms and their proofs are roughly structured in three phases as follows:
\begin{description}
	\item[1. Credit Accumulation] Fix simple initial parts of the solution within little computation time and compute a lower bound on the aspired delay. Emit only enough solution parts to pay for this initial effort and hold back the rest to save credits for the next step.
	\item[2. Extension] Extend the set of initial solution parts to a complete solution.
		Pay for the required computation with held-back solution parts from the credit accumulation phase.
	\item[3. Output Finalization] Emit the remaining solution parts from phases~1 and~2 without repeating a solution part.
\end{description}

\subsection{Proof Techniques for Lower Bounds}
\label{subsec:lower-techniques}

For proving lower bounds on the time complexity of enumerating a spanning tree we analyze how evenly the computational effort for a set of solution parts could possibly be spread over corresponding delays.
These kinds of lower bounds come in two different flavors:

Any algorithm enumerating all $n-1$ edges of a spanning tree with preprocessing in $O(p)$ and delay in $O(d)$ yields a total-time algorithm with time complexity in $O(p + n \cdot d)$.
Assume it takes $\Omega(\totaltime)$ to produce all $n-1$ edges of a spanning tree (either by an unconditional lower bound or by comparing to an algorithm with runtime in $\Theta(\totaltime)$ as benchmark).
Then every enumeration algorithm has to have preprocessing in $\Omega(\totaltime)$ or delay in $\Omega(\frac{\totaltime}{n})$ or it would beat the total-time bound.

For a more fine-grained analysis we observe that we do not need to have a lower bound on the total time to compute all solution parts, but can also make use of a lower bound of $\Omega(\prefixtime)$ on the required computational effort for finding any first $k$ solution edges.
Similarly to before, this yields a lower bound of either $\Omega(\prefixtime)$ on preprocessing or $\Omega(\frac{\prefixtime}{k})$ on delay.
For this kind of lower bound we construct adversary arguments:
We formalize an algorithm's read access to the input data as queries to an adversary.
This adversary can freely choose how to answer the queries as long as all answers combined are consistent with at least one actual possible input graph.
The adversary uses this freedom to hide essential information on the instance from the algorithm (usually inside densely connected subgraphs) and thereby forces the algorithm to use a lot of queries to fix the first set of $k$ solution parts.

The interface of our adversary mimics providing the input in the form of adjacency lists.
Usually, an algorithm has constant time random access to the degree of any vertex and to the head of its adjacency list.
It can then iterate through the list to determine the neighbors of this vertex.
Accordingly, our adversaries allow for the following queries for any vertex $v$:
\begin{description}
	\item[degree query] Returns the out-degree of $v$.
	\item[neighbor query] Returns the next out-adjacency of $v$; along with its edge weight if present.
\end{description}

\section{Theoretical Results}

We now apply the introduced techniques to computing (minimum) spanning trees for graphs with and without directions or edge weights.

\subsection{Undirected Unweighted Spanning Trees}
\label{sec:uust}

For connected input graphs with neither edge directions nor edge weights we derive an enumeration lower bound from an unconditional lower bound on the total time required for computing a spanning tree.
Recall for the following theorem that $m$ is the number of edges and $\avgdeg$ is the average degree of the graph.

\begin{theorem}
	\label{thm:uust-lower}
	No algorithm can compute a spanning tree of an unweighted, undirected, connected graph in $o(m)$.
	No algorithm can enumerate the edges of such a spanning tree with preprocessing in $o(m)$ and delay in $o(\avgdeg)$.
\end{theorem}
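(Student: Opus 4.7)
The plan is to establish the unconditional total-time bound first, and then derive the enumeration tradeoff from it by a short reduction. The natural hard instances are tree inputs: a tree on $n$ vertices has $m = n-1$ edges and is its own unique spanning tree, so any correct algorithm must emit every single input edge. Using the adversary from \Cref{subsec:lower-techniques}, the key observation is that an edge only becomes known to the algorithm via a neighbor query at one of its endpoints, and each neighbor query reveals at most one incidence (degree queries reveal no incidences whatsoever). If the algorithm outputs an edge $(u,v)$ that it has not witnessed through a neighbor query at $u$ or $v$, then the adversary still has the freedom to commit to a tree consistent with all prior answers in which $(u,v)$ is absent, falsifying the output. Hence any algorithm must issue at least $n-1$ neighbor queries on tree instances, which yields $\Omega(n-1) = \Omega(m)$ runtime.

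For the enumeration statement I would argue by contradiction. Suppose some algorithm enumerates spanning tree edges with preprocessing $p = o(m)$ and delay $d = o(\avgdeg) = o(2m/n)$. Then after the final edge is emitted the total elapsed time is at most
\[
p + (n-1)\cdot d \;=\; o(m) + (n-1)\cdot o\!\left(\tfrac{2m}{n}\right) \;=\; o(m),
\]
contradicting the first part of the theorem.

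The main obstacle, and the step that needs the most care, is making the adversary's consistency claim watertight: whenever the algorithm tries to emit an unverified edge, the adversary must exhibit a connected tree that simultaneously agrees with every earlier degree and neighbor answer yet excludes the guessed edge. On tree inputs this is manageable because degree queries can be answered with values compatible with several candidate trees and the unrevealed positions of the adjacency lists give the adversary enough flexibility to reroute edges until sufficiently many neighbor queries have been posed; but this is the only subtle piece of the argument.
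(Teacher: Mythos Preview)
Your reduction in the second paragraph is exactly what the paper does and is sound; the trouble is the choice of hard instances. Trees have $m = n-1$ and $\avgdeg < 2$, so the first claim collapses to ``running time is at least the output size'' and the second to ``delay cannot be $o(1)$'', which holds for every enumeration problem whatsoever. The theorem is meant to show that $\Omega(\avgdeg)$ delay is forced even when $\avgdeg$ grows with $n$---this is what makes it match the $O(\avgdeg)$ upper bound of \Cref{thm:uust-upper}---and sparse instances cannot exhibit that, so your argument for the enumeration part is vacuous rather than informative. Your adversary sketch for the total-time part is also incorrect as stated: degree queries can determine a tree completely (if the adversary ever commits to a vertex of degree $n-1$, the algorithm can emit all of its incident edges without a single neighbor query and the adversary has no room to refute them), so the assertion that an unwitnessed edge can always be excluded fails. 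The $\Omega(m)$ bound on trees does hold, but only for the trivial output-size reason, not the one you give.

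The paper instead uses a dense instance: two almost-cliques of size about $n/2$ joined by a two-edge bridge, with the missing intra-clique edges arranged so that every vertex in a component has the same degree. Any spanning tree must contain a bridge edge, but degree queries cannot locate the bridge endpoints, and the adversary can keep answering neighbor queries with intra-clique edges until $\Omega(n^2) = \Omega(m)$ queries have been posed. This gives the $\Omega(m)$ total-time bound on a family with $\avgdeg = \Theta(n)$, from which your own reduction then yields a non-vacuous $\Omega(\avgdeg)$ delay lower bound.
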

\begin{proof}
	Consider a graph with two almost-clique components with sizes $\lceil\frac{n}{2}\rceil$ and $\lfloor\frac{n}{2}\rfloor$ connected by a bridge of two edges as shown in \Cref{fig:st-total-lower}.
	In both components all vertices are connected to all others except for two vertices which do not share a common edge but are endpoints to the bridge connecting to the other component.

	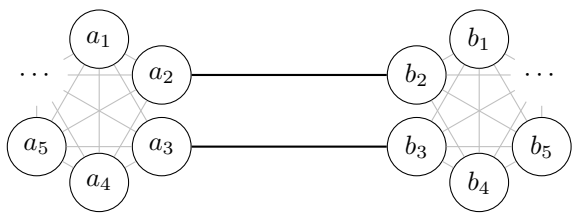
\begin{figure}[t]
		\centering
		\begin{tikzpicture}[vertex/.append style = {minimum width=2.2em}]
			\foreach \v in {1,...,5}{
				\node[vertex] (a\v) at (150-60*\v:0.95cm) {$a_\v$};
			}
			\node[vertex,draw=none] (a6) at (150:0.95cm) {$\cdots$};
			\foreach \u in {1,...,5}{
				\pgfmathtruncatemacro{\unext}{\u+1}
				\foreach \v in {\unext,...,6}{
					\ifthenelse{\u = 2 \and \v = 3}{}{
						\draw[lightgray] (a\u) -- (a\v);
					}
				}
			}
			\begin{scope}[xshift=5cm]
				\foreach \v in {1,...,5}{
					\node[vertex] (b\v) at (30+60*\v:0.95cm) {$b_\v$};
				}
				\node[vertex,draw=none] (b6) at (30:0.95cm) {$\cdots$};
				\foreach \u in {1,...,5}{
					\pgfmathtruncatemacro{\unext}{\u+1}
					\foreach \v in {\unext,...,6}{
						\ifthenelse{\u = 2 \and \v = 3}{}{
							\draw[lightgray] (b\u) -- (b\v);
						}
					}
				}
			\end{scope}

			\draw[thick]
				(a2) edge (b2)
				(a3) edge (b3)
			;
		\end{tikzpicture}
		\caption{The adversarial graph for \Cref{thm:uust-lower}. An algorithm can be trapped in the cliques before being able to detect one of the two connecting edges; one of which must be part of the tree.}
		\label{fig:st-total-lower}
	\end{figure}

	We use the adversarial setup described in \Cref{subsec:lower-techniques}.
	Observe that within each component all vertices have the same degree, thus an algorithm cannot identify the bridge endpoints with degree queries.
	Also, as long as the algorithm has not queried the complete neighborhood of at least all but $2$ vertices in one component, the adversary is free to, for neighbor queries, always return an edge from the queried vertex to an adjacent vertex in the same component.

	Assume, an algorithm were to return a spanning tree of the input graph after $o(m)$ queries to the adversary.
	Clearly, this spanning tree has to connect the two components, so let, w.\,l.\,o.\,g., edge $\{a_2, b_2\}$ be part of the output.
	But as the algorithm has not yet queried at least $\lfloor \frac{n}{2} \rfloor - 2$ complete neighborhoods of size at least $\lfloor \frac{n}{2} \rfloor - 1$, it has not seen $\{a_2, b_2\}$ and the adversary can swap the bridge endpoints in one clique such that the edge does not actually exist (and the algorithm should have returned $\{a_2, b_3\}$ or $\{a_3, b_2\}$ instead).

	Thus, no algorithm can compute a complete spanning tree for a graph in $o(m)$.
	This implies a lower bound of $\Omega(m)$ on the total time of an enumeration algorithm for the same problem.
	It follows that no algorithm can enumerate the $n-1$ solution edges of a spanning tree with preprocessing in $o(m)$ and delay in $o(\frac{m}{n-1}) = o(\avgdeg)$.
\end{proof}

A simple depth first search (DFS) started on any vertex produces in $O(m)$ a spanning tree in the form of a DFS-tree, matching the lower bound in the total-time setting.
The lower bound is tight for the enumeration variant as well, as we now design an algorithm that enumerates a spanning tree without preprocessing and with delay in $O(\avgdeg)$.
Roughly, the three phases of this algorithm work as follows.
In the credit accumulation phase, it collects $\frac{n}{2} \leq k \leq n - 1$ edges with constant delay that form a forest.
These edges give enough head start to, in the extension phase, run a modified version of Prim's MST algorithm in $O(m)$ that selects additional $n - 1 - k$ edges to extend the forest to a spanning tree.
The output finalization phase emits the remaining edges from the first two phases.

\subparagraph*{Phase 1: Credit Accumulation.}
All undirected edges selected in the first phase as part of the final spanning tree are stored in a graph data structure $F$ in adjacency lists representation; initially $F$ is a graph with $n$ isolated vertices.
For each vertex $u$ with empty $F[u]$, the algorithm selects the first edge $\{u,v\}$ in the adjacency list $G[u]$ and adds the edge to $F$ (both $u$ to $F[v]$ and $v$ to $F[u]$ are added).
Simultaneously, the algorithm computes the average degree $\avgdeg$ by summing up all vertex degrees and dividing by $n$.

Recall from \Cref{subsec:upper-techniques} that, in addition to storing edges in $F$, our algorithm appends all edges selected for emission to a \textit{solution queue} from which the algorithm has to emit edges with delay in $O(\avgdeg)$.
The first $\frac{n}{4}$ selected edges from the credit accumulation phase are emitted with constant delay (as the algorithm needs time to compute $\avgdeg$).
All following edges are emitted with delay in $\Theta(\avgdeg)$.

\subparagraph*{Phase 2: Extension.}
The algorithm runs a modified version of Prim's algorithm \cite{jarnikJistemProblemuMinimalnim1930,primShortestConnectionNetworks1957,dijkstraNoteTwoProblems1959}:
Starting from an arbitrary vertex, an MST is constructed by repeatedly selecting from a priority queue an edge of minimum weight that connects a previously unconnected vertex to the growing spanning tree.
In our unweighted case we prioritize pre-selected edges over all other edges to make sure all edges from phase~1 are picked by the algorithm.
We give a complete pseudo-code description of our modifications in \Cref{alg:extension-prim}.

The algorithm stores the tree in the form of predecessor attributes in an array~$T$ of length~$n$, where a vertex $v$ connected to the tree via an edge $\{u,v\}$ has $T[v] = u$.
Additionally, the algorithm maintains two queues of directed edges that are to be used next to connect vertices to the growing tree; one list for pre-selected edges and one list for all edges, where the former is prioritized in edge selection over the later.
Pre-selected edges are added to both queues as filtering all adjacent edges according to whether they are part of $F$ or not is too expensive.
Skipping unneeded edges later on the other hand is cheap by checking $T$ in line~\ref{ln:prim-visit}.

\begin{algorithm2e}
	\KwIn{undirected graph $G = (V,E)$, corresponding forest $F$ from phase~1}
	\KwOut{spanning tree encoded as predecessor links in array $T$ of length $|V|$}
	\SetKwFunction{KwEnqueue}{Enqueue}
	\SetKwFunction{KwDequeue}{Dequeue}
	\SetKwData{KwPreSelected}{preSelectedEdges}
	\SetKwData{KwAllEdges}{allEdges}

	\KwPreSelected = $\emptyset$, \KwAllEdges = $\emptyset$\;
	\lForEach{$v \in V$}{%
		$T[v] = \textsc{nil}$%
	}

	\KwEnqueue{\KwAllEdges, $(1, 1)$}\tcp*[l]{fictitious start edge}

	\While{$\KwPreSelected \neq \emptyset$ or $\KwAllEdges \neq \emptyset$}{ \label{ln:prim-loop}
		\lIf{$\KwPreSelected \neq \emptyset$}{\label{ln:prim-dequeue}%
			$(u,v) = \KwDequeue{\KwPreSelected}$%
		}
		\lElse{%
			$(u,v) = \KwDequeue{\KwAllEdges}$%
		}
		\If{$T[v] == \textsc{nil}$}{ \label{ln:prim-visit}
			$T[v] = u$\;
			\lForEach{$w \in F[v]$}{%
				\KwEnqueue{\KwPreSelected, $(v, w)$}%
			} \label{ln:prim-expand}
			\lForEach{$w \in G[v]$}{%
				\KwEnqueue{\KwAllEdges, $(v, w)$}%
			}
		}
	}

	$T[1] = \textsc{nil}$\tcp*[l]{remove fictitious start edge}

	\KwRet{T}\;

	\caption{Extension-Prim}
	\label{alg:extension-prim}
\end{algorithm2e}

\subparagraph*{Phase 3: Output Finalization.}
The algorithm iterates over all edges in the forest $F$ from the first phase, sets the corresponding entries in the tree $T$ from the second phase to \textsc{nil} and adds for all remaining non-\textsc{nil}-entries $T[v] = u$ the edge $\{u,v\}$ to the solution queue.

\begin{theorem}
	\label{thm:uust-upper}
	Enumerating the edges of a spanning tree of an unweighted, undirected, connected graph $G$ can be done without preprocessing, with delay in $O(\avgdeg)$, with $\Theta(n)$ lazy-initialized memory and with space complexity in $\Theta(n)$.
\end{theorem}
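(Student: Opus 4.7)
My plan is to verify, against the three-phase algorithm just described, that it (i) emits each spanning-tree edge exactly once, (ii) achieves delay $O(\avgdeg)$ with no preprocessing, and (iii) fits into $\Theta(n)$ space with at most $\Theta(n)$ lazily-initialized cells.

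Correctness I would argue phase by phase. Phase~1 produces a forest because every edge added during the iteration for a vertex $u$ has $F[u]$ empty at that moment, so $u$ is isolated in $F$ and the new edge cannot close a cycle. Phase~2 produces a spanning tree $T$ with $F \subseteq T$ via the invariant that, once any vertex of an $F$-component has first been visited, the pre-selected queue holds at least one $F$-edge out of that vertex and must be drained before AllEdges can contribute another vertex; inductively, every further vertex of that component is first visited through an $F$-edge. Phase~3 pairs each $\{u,v\} \in F$ with the $T$-entry it was used to set and nils it out, so the non-nil entries of $T$ that survive correspond exactly to the edges of $T \setminus F$; enqueuing these together with the $F$-edges already placed in the solution queue during phase~1 yields each spanning-tree edge exactly once.

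For the delay bound I would apply the banker's view from \Cref{subsec:upper-techniques}. Phase~1 does $\Theta(n)$ work (edge selection interleaved with the running sum that produces $\avgdeg$), and its first $n/4$ emissions at constant delay cover this work on a step-for-step basis. Each subsequent emission earns $\Theta(\avgdeg)$ credits, which matches phase~2's $\Theta(m)$ of work spread over an emission rate of $\Theta(1/\avgdeg)$ per computation step, and easily absorbs phase~3's $\Theta(n)$ of additional work.

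The main obstacle, genuinely a scheduling argument rather than pure amortization, is to show that the solution queue is never empty when an emission falls due. Here the purpose of emitting only $n/4$ of the $k \geq n/2$ edges found in phase~1 becomes visible: the at-least-$n/4$-edge buffer stockpiled in the solution queue, at delay $\Theta(\avgdeg)$ each, postpones the next required emission by $\Theta(m)$ of wall-clock time in total---long enough for phase~2 to complete and for phase~3 to enqueue the $n-1-k$ edges of $T \setminus F$ before the buffer runs out. Choosing the delay constant large enough turns this informal count into a concrete lower bound on the arrival time of the $(k+1)$-st emission. The space bound is then immediate: $F$ carries at most $n-1$ edges, $T$ is a length-$n$ predecessor array, and the solution and pre-selected queues together hold $O(n)$ edge references at any one time; $F$ and $T$ are the structures needing lazy initialization to avoid any $\Theta(n)$ preprocessing step.
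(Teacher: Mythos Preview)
Your correctness and delay arguments track the paper's three-phase scheme and are essentially sound; in fact your Phase~1 acyclicity argument (the origin vertex is isolated in $F$ at the moment of insertion, hence becomes a leaf and cannot close a cycle) and your operational Phase~2 invariant (once any vertex of an $F$-component is visited, the pre-selected queue is refilled and must be drained before \textit{allEdges} is touched again, so the whole component is absorbed via $F$-edges) are more direct than the paper's route, which assigns artificial $0/1$ weights and appeals to Prim's MST correctness plus the observation that every MST under those weights must contain the forest $F$.

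There is, however, a real gap in your space analysis. You account for $F$, $T$, the solution queue, and the pre-selected queue, but you never mention the \textit{allEdges} queue of \Cref{alg:extension-prim}. As written, the algorithm enqueues every $G$-adjacency of every visited vertex into \textit{allEdges}, so that queue can grow to $\Theta(m)$ entries and the claimed $\Theta(n)$ space bound fails. The paper closes this with an explicit implementation change: instead of eagerly expanding a visited vertex's neighborhood into the queues, one enqueues only the vertex itself and lazily steps through its adjacency list at dequeue time, bounding both internal queues by $O(n)$ vertex references plus one adjacency-list cursor each. Your sentence ``the space bound is then immediate'' skips precisely this non-immediate step. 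A minor related point: $T$ does not require lazy initialization---by the start of Phase~2 you have already banked $\Theta(m)\supseteq\Theta(n)$ credits and can afford a plain $\Theta(n)$ initialization loop; only $F$ needs the lazy-init mechanism.
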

\begin{proof}
	We first prove that the presented algorithm produces a valid spanning tree and emits each edge exactly once, before analyzing the algorithm's time and space complexity.

	\subparagraph*{Correctness}

	In the credit accumulation phase each vertex is adjacent to at least one selected edge, thus $k \geq \frac{n}{2}$ edges are selected.
	If edge $\{u,v\}$ was selected from $G[u]$, we call $u$ the \textit{origin} of this edge.
	If there were a cycle of length $c$ in $F$, each of the $c$ vertices in the cycle would have to be the origin of one of the $c$ cycle edges.
	However, after selecting the first cycle edge $\{u,v\}$ with origin $u$, the adjacency list $F[v]$ is non-empty and thus $v$ cannot be the origin of any edge selected in this phase.
	Therefore $F$ is acyclic and consists of $k \leq n-1$ edges which are all enqueued for later output.

	Assume for the extension phase, that pre-selected edges in $F$ have weight~0 and all other edges in $G$ have weight~1.
	Then, using the two queues and skipping edges to vertices already connected to the growing tree as shown in \Cref{alg:extension-prim} effectively mimics a priority queue over vertices with the priority of a vertex being the minimum weight of an edge that connects the vertex to the growing tree.
	Thus, correctness of Prim's algorithm implies that \Cref{alg:extension-prim} indeed produces a spanning tree $T$.
	As $F$ is a forest and can be extended to a spanning tree, every min-weight spanning tree (with weights defined as assumed above) must include all edges from $F$, which proves $F \subseteq T$.

	As the output finalization phase appends the missing edges $T \setminus F$ to the solution queue, each edge of the spanning tree $T$ is emitted exactly once by the enumeration algorithm.

	\subparagraph*{Time Complexity}

	We apply the accounting method to analyze the algorithm's time complexity.
	Each credit can be used to perform a constant number of steps.
	The initial balance is $\avgdeg > 0$, as we are aiming for a delay of $O(\avgdeg)$ without preprocessing time.

	Note that initially the algorithm does not know $\avgdeg$, so it cannot assume a higher lower bound on the delay than $\avgdeg \in \Omega(1)$.
	The first $\frac{n}{4}$ edges selected in the credit accumulation phase receive $6$ credits each.
	One credit pays for the selection itself, one pays for skipping the second edge endpoint in the iteration over $F$.
	Thus, selecting and emitting each edge takes amortized constant time.
	The remaining 4 credits per edge accumulate to $n$ credits that are used to compute $\avgdeg$.
	Now that the algorithm knows a proper lower bound on the delay, the next $\frac{n}{4}$ edges receive $2 + 8\avgdeg$ credits each.
	Again two credits pay for the selection and for skipping the second endpoint.
	The remaining $8\avgdeg$ credits per edge accumulate to $2m$ credits that pay for all computation in phases~2 and~3 which each run in $O(m)$ total time:

	In the extension phase, each undirected edge of the graph is appended to each queue at most twice (once per direction), so the loop in line~\ref{ln:prim-loop} of \Cref{alg:extension-prim} runs $O(m)$ times and the queue sizes cannot exceed $O(m)$.
	The \texttt{if}-block in line~\ref{ln:prim-visit} runs once per vertex, so the algorithm iterates over each vertex neighborhood at most twice, resulting in $O(m)$ steps.

	The output finalization phase iterates over $F$ once which takes $O(n)$ time.
	The additional edges appended to the solution queue in this phase receive no credit.

	As every edge appended to the solution queue receives credit in $O(\avgdeg)$ and all computation steps are paid for by credit accumulated before, the delay is in $O(\avgdeg)$ as claimed.

	\subparagraph*{Space Complexity}

	The credit accumulation phase uses $\Theta(n)$ lazy-initialized memory for $F$ and puts $O(n)$ edges into the solution queue.
	\Cref{alg:extension-prim} in the extension phase uses up to $\Theta(m)$ memory for the queue of all edges.
	Note however that, instead of expanding the neighborhood of vertex $v$ in lines~\ref{ln:prim-expand}f. into the two queues immediately, we can append only the vertex to the respective list.
	By expanding the vertex to the edges in its neighborhood lazily in line~\ref{ln:prim-dequeue}, we thereby shrink the length of both queues and thus the overall space complexity of \Cref{alg:extension-prim} to $O(n)$.
	The output finalization phase does not require additional memory on top of the provided $F$ and $T$ and the solution queue of at most $n-1$ edges.
\end{proof}

The analysis of the space complexity also shows why the extension phase does not simply contract all pre-selected edges to find the remaining tree via a depth first search:
Storing the contracted graph needs $\Theta(m)$ space in the worst case, while our Prim-based extension can be easily implemented to only use $\Theta(n)$ memory.

\subsection{Undirected Minimum Spanning Trees}
\label{sec:umst}

Any minimum weight spanning tree for an edge-weighted graph is also an unweighted spanning tree for the same graph.
Therefore the bound from \Cref{thm:uust-lower} carries over.
\begin{corollary}
	\label{cor:mst-avgdeg-lower}
	No algorithm can enumerate a minimum spanning tree of an edge-weighted connected graph with preprocessing in $o(m)$ and delay in $o(\avgdeg)$.
\end{corollary}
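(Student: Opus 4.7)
The plan is to prove the corollary by a straightforward reduction from the unweighted lower bound in \Cref{thm:uust-lower}. The core observation is that every connected unweighted graph can be viewed as a weighted graph in which all edges carry the same weight, and in such an instance every spanning tree is a minimum spanning tree.

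More concretely, I would argue by contradiction: suppose some algorithm $A$ enumerates an MST of any edge-weighted connected input graph with preprocessing in $o(m)$ and delay in $o(\avgdeg)$. Given an arbitrary unweighted connected graph $G = (V,E)$, construct the weighted instance $G' = (V, E, w)$ by setting $w(e) = 1$ for every $e \in E$. Since the adjacency lists of $G'$ coincide with those of $G$ except for an attached unit weight that carries no information, the same adversary construction from \Cref{thm:uust-lower} remains consistent with the queries $A$ can make. Running $A$ on $G'$ therefore produces, edge by edge, a minimum spanning tree of $G'$, which is in particular a spanning tree of $G$. This would give an enumeration algorithm for spanning trees of unweighted connected graphs with preprocessing in $o(m)$ and delay in $o(\avgdeg)$, contradicting \Cref{thm:uust-lower}.

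The only subtlety, and thus the main thing to spell out carefully, is that feeding uniform weights to $A$ does not let it circumvent the adversary argument: the adversary from \Cref{thm:uust-lower} just needs to additionally return the constant weight $1$ on every neighbor query, and all its consistency arguments about the underlying unweighted graph go through unchanged. Once this is established, the contradiction and hence the corollary follow immediately.
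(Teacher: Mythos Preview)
Your proposal is correct and matches the paper's approach: the paper justifies the corollary in a single sentence before stating it, noting that any minimum spanning tree of a weighted graph is in particular a spanning tree of the underlying unweighted graph, so the lower bound of \Cref{thm:uust-lower} carries over. Your reduction via assigning unit weights and re-running the same adversary is exactly this observation spelled out in more detail.
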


Given that no deterministic comparison-based MST algorithm with complexity in $O(m)$ is known, we can formulate a higher lower bound conditioned on the runtime of total-time MST algorithms, that generically applies to different machine models:
\begin{corollary}
	\label{cor:mst-algo-lower}
	Let $\totaltime$ be the runtime of an optimal MST algorithm.
	No algorithm can enumerate an MST of an edge-weighted connected graph with preprocessing in $o(\totaltime)$ and delay in $o(\frac{\totaltime}{n})$.
\end{corollary}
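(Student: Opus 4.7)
The plan is to instantiate the generic "total-time vs.\ enumeration" reduction from \Cref{subsec:lower-techniques} with the MST total-time benchmark $\totaltime$. I would frame the argument as a direct contradiction: assume an enumeration algorithm with preprocessing in $o(\totaltime)$ \emph{and} delay in $o(\frac{\totaltime}{n})$, then show this yields a total-time MST algorithm that beats $\totaltime$.

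First I would recall that by definition any enumeration algorithm with preprocessing bound $p$ and delay bound $d$ can be turned into a total-time algorithm simply by running it to completion; this takes at most $p + (n-1) \cdot d$ steps, since a spanning tree has exactly $n-1$ edges and the algorithm emits them one by one. Plugging in the assumed bounds gives total runtime in $o(\totaltime) + (n-1) \cdot o(\frac{\totaltime}{n}) \subseteq o(\totaltime)$, which contradicts $\totaltime$ being the runtime of an optimal MST algorithm. Hence at least one of the two bounds must fail to be in little-$o$, giving the claimed disjunction.

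The only subtle point (which is really just a matter of wording rather than a genuine obstacle) is to note that the argument is conditional: it says no enumeration algorithm can simultaneously achieve both asymptotic improvements \emph{relative to the best known total-time algorithm}. If a faster total-time MST algorithm were discovered, $\totaltime$ would shrink accordingly and the corollary would automatically tighten. I would state this explicitly so the reader does not mistake the bound for an unconditional one in the style of \Cref{thm:uust-lower}. Beyond that, the proof is a one-line computation and deserves no more than a short paragraph.
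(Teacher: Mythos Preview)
Your proposal is correct and matches the paper's own reasoning: the corollary is stated without a separate proof precisely because it is an immediate instantiation of the generic total-time-to-enumeration reduction from \Cref{subsec:lower-techniques}, which is exactly the argument you spell out. Your remark on the conditional nature of the bound also mirrors the paper's framing in the sentence preceding the corollary.
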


For smaller preprocessing time, we can use an adversary argument to give an unconditional and even higher lower bound on the delay:
As illustrated in \Cref{fig:mst-maxdeg-lower}, an adversary can force any algorithm to inspect either the degree of many vertices in a clique or the complete neighborhood of a single clique vertex before finding the first solution edge.

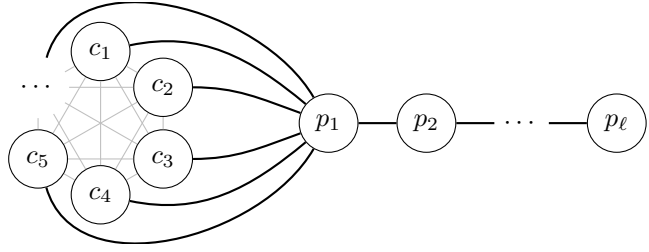
\begin{figure}[t]
	\centering
	\begin{tikzpicture}[vertex/.append style = {minimum width=2.2em}]
		\clip (-1.4, -1.6) rectangle (7.3, 1.6);
		\foreach \v in {1,...,5}{
			\node[vertex] (c\v) at (150-60*\v:0.95cm) {$c_\v$};
		}
		\node[vertex,draw=none] (c6) at (150:0.95cm) {$\cdots$};
		\foreach \u in {1,...,5}{
			\pgfmathtruncatemacro{\unext}{\u+1}
			\foreach \v in {\unext,...,6}{
				\draw[lightgray] (c\u) -- (c\v);
			}
		}
		\node[vertex] (p1) at (3cm,0) {$p_1$};
		\node[vertex, right=0.5 of p1] (p2) {$p_2$};
		\node[right=0.5 of p2] (p3) {$\cdots$};
		\node[vertex, right=0.5 of p3] (p4) {$p_\ell$};
		\draw[thick]
		(c2) edge[out=0, in=160] (p1)
		(c3) edge[out=0, in=-160] (p1)
		(c1) edge[out=12, in=140] (p1)
		(c4) edge[out=-12, in=-140] (p1)
		(c6) edge[out=74, in=120] (p1)
		(c5) edge[out=-74, in=-120] (p1)
		(p1) edge (p2)
		(p2) edge (p3)
		(p3) edge (p4)
		;
	\end{tikzpicture}
	\caption{The adversarial input for the proof of \Cref{thm:mst-maxdeg-lower} to trap an algorithm in the clique (many edges, no solution parts) before it can process the path (few edges, all are solution parts).}
	\label{fig:mst-maxdeg-lower}
\end{figure}

\begin{restatable}{theorem}{thmmstmaxdeglower}
	\label{thm:mst-maxdeg-lower}
	No algorithm can enumerate a minimum spanning tree of an edge-weighted connected graph with at least two different edge weights, maximum degree $\maxdeg$ and average degree $\avgdeg \in o(\maxdeg)$ with both preprocessing and delay in $o(\maxdeg)$.
\end{restatable}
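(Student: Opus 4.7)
My plan is to formalize the adversary argument illustrated by \Cref{fig:mst-maxdeg-lower}. I fix a clique on $k$ vertices $c_1,\ldots,c_k$ with all $\binom{k}{2}$ interior edges of weight~$1$, add a weight-$0$ edge from every $c_i$ to a central vertex $p_1$, and attach a path $p_1 - p_2 - \cdots - p_\ell$ of weight-$0$ edges. Choosing $\ell$ large enough (for example $\ell = k^3$) gives $\maxdeg = k+1$ attained at $p_1$ and $\avgdeg \in \Theta(1) = o(\maxdeg)$. The weight-$0$ edges already form a spanning tree on the $n = k + \ell$ vertices, so the MST is exactly the set of weight-$0$ edges and no clique edge lies in it.

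I then instantiate the adversary of \Cref{subsec:lower-techniques} to delay revealing any weight-$0$ edge. It maintains a set $Q$ of vertices committed to be clique vertices, starting empty and growing to at most $k$ members. A degree query on $v$ is answered with the clique degree $k$ whenever $v \in Q$, or whenever $|Q| < k$ (in which case $v$ is added to $Q$). A neighbor query on $v \in Q$ returns the next unreported clique neighbor of $v$---committing that neighbor to $Q$ if it was still uncommitted---as long as fewer than $k - 1$ clique neighbors of $v$ have been reported so far; only once this quota is reached is the adversary forced to return the weight-$0$ edge $(v, p_1)$. A neighbor query on an uncommitted $v$ first commits $v$ to $Q$ (if there is still room) and then follows the same rule. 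A direct case distinction shows that fewer than $k$ queries are never enough to force a weight-$0$ edge out: either $k$ neighbor queries on a single clique vertex are spent exhausting its $k-1$ clique neighbors before $p_1$ pops up, or $|Q|$ has to be filled to size $k$ before an uncommitted vertex can be probed, which also costs at least $k$ queries.

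Finally, I argue that no correct emission is possible in fewer than $k$ queries. Every revealed edge so far has weight $1$ and is therefore \emph{not} in the MST, so outputting a revealed clique edge is wrong. For an edge the algorithm tries to emit without having seen it, I use a two-worlds variant of the argument from \Cref{thm:uust-lower}: as long as at least two uncommitted vertices remain, the adversary can still consistently reassign the role of $p_1$ to either of them or swap two consecutive positions on the path among uncommitted vertices, and for every candidate the algorithm might name---a bridge $(c_i, u)$, a path edge $(u, v)$ between uncommitted vertices, or a clique-internal edge inside $Q$---there is a completion in which the edge is absent from the graph or outside the MST. Since each query costs at least constant time, this yields $\Omega(\maxdeg)$ time before the first correct emission, contradicting the simultaneous assumption of $o(\maxdeg)$ preprocessing and $o(\maxdeg)$ delay.

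The main obstacle will be the two-worlds bookkeeping. Each neighbor query can commit up to two vertices to $Q$ (the queried vertex and the reported endpoint), so $|Q|$ grows faster than the query count, and I will need to pick $\ell$ comfortably larger than $k$ to guarantee a large pool of uncommitted vertices across which to permute the $p_1$ label and the path ordering. The routine but delicate piece is then verifying, type by type of candidate emission, that the adversary's remaining freedom is sufficient to exhibit a consistent completion invalidating that specific edge.
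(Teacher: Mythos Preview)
Your construction and adversary idea coincide with the paper's, and the proposal would lead to a correct $\Omega(\maxdeg)$ bound. But you are making the bookkeeping harder than necessary, and the paper's proof shows how to avoid essentially all of your ``routine but delicate'' case analysis.

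The key simplification you are missing is one extra rule for the adversary: whenever a neighbor query must be answered, return an edge to an \emph{already committed} clique vertex if any such edge is still unreported. With this rule the committed set grows by at most one vertex per query (two only on the very first query), so after $k-3$ queries at most $k-2$ clique roles are filled and at least two are still free. Now the whole two-worlds argument collapses to a single line: whatever edge $\{u,v\}$ the algorithm emits, assign any uncommitted endpoint to one of the spare clique roles, making $\{u,v\}$ a heavy clique edge and hence not in the MST. There is no need to distinguish bridges, path edges, or clique-internal edges, nor to reassign $p_1$ or swap path positions; path vertices simply never enter the picture in the relevant query range.

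This also repairs the slip in your dichotomy: your claim that filling $|Q|$ to size $k$ ``costs at least $k$ queries'' is wrong by the very factor-two growth you flag in your final paragraph, so the ``direct case distinction'' as stated does not go through without patching. The paper's bound of $q+1$ committed vertices after $q$ queries makes $|Q|=k$ unreachable within $k-3$ queries and removes the issue entirely.
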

\begin{proof}
	Consider an adversary according to \Cref{subsec:lower-techniques} that supplies an input graph with $n$ vertices to an algorithm.
	
	The adversary will, for arbitrary $k \in o(n) \cap \omega(1)$, construct a graph consisting of a clique of $k$  vertices $c_1, \ldots, c_k$, a path of $\ell = n - k$ vertices $p_1, \ldots, p_\ell$, and additional edges connecting all clique vertices to $p_1$ (see \Cref{fig:mst-maxdeg-lower}).
	All clique edges have edge weight $2$, all edges with at least one endpoint in the path have weight $1$.
	The unique MST of the graph therefore consists of all $n-1$ edges of weight $1$.
	Note that the graph has maximum degree $\maxdeg \in \Theta(k)$ and average degree $\avgdeg \in \Theta(\frac{k^2 + n}{n}) \subseteq o(k)$.

	While both the structure of the graph and all edge weights are fixed, the adversary is free to choose any consistent mapping of vertex IDs $\{1, \ldots, n\}$ to vertices in the graph prototype.
	Therefore the adversary can answer the first $k-3$ queries without revealing information on other vertices than $c_1, \ldots, c_{k-2}$.
	More specifically, it will answer each degree query with $k$ and each neighbor query with an edge of weight $2$ to one of those vertices.
	
	Now assume, the algorithm emits as solution part an edge $\{u,v\}$ before asking at least $k-2$ queries.
	Then, no matter what $u,v$ are, the adversary can choose the remaining vertex IDs in a way that $\{u,v\}$ is a clique edge of weight $2$ and thus not a correct solution part.
	
	Therefore, no algorithm can emit a solution part after $o(k) = o(\maxdeg)$ steps.
\end{proof}

Without preprocessing, we can match the combination of the delay lower bounds of \Cref{thm:mst-maxdeg-lower} and \Cref{cor:mst-algo-lower} with a corresponding upper bound;
assuming the upper bound on the total time of the referenced algorithm in \Cref{cor:mst-algo-lower} is computable.
Our enumeration algorithm essentially runs one iteration of Borůvka's algorithm \cite{boruvkaJistemProblemuMinimalnim1926} in the credit accumulation phase, which yields at least $\frac{n}{2}$ solution edges.
These initial solution edges give enough head start to, in the extension phase, run a total-time MST algorithm of our choice as black box to complete the tree.
As we have to make sure, that the black box algorithm produces a minimum spanning tree that includes the pre-selected edges, we set their weight to a new minimum weight (or answer edge weight comparisons accordingly in the comparison model).

\begin{restatable}{theorem}{thmmstmaxdegupper}
	\label{thm:mst-maxdeg-upper}
	Let $\totaltime$ be an upper bound on the runtime of an MST algorithm such that $\totaltime$ is computable in $O(\maxdeg n)$ steps and let $\totalspace$ be its space complexity.
	Enumerating the edges of an MST of an edge-weighted connected graph $G$ can be done without preprocessing, with delay in $O(\max(\maxdeg, \frac{\totaltime}{n}))$, and with space complexity in $\Theta(n + \totalspace)$.
\end{restatable}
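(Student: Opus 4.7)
The plan is to follow the three-phase scheme of \Cref{thm:uust-upper}, adapted for weighted edges by replacing the arbitrary initial edge selection of phase~1 with one round of Borůvka's algorithm so that the pre-selected forest is provably contained in some MST. Throughout I write $d := \max(\maxdeg, \frac{\totaltime}{n})$ for the target delay.

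In the credit accumulation phase the algorithm iterates over vertices $v = 1, \ldots, n$; for each $v$ it scans $G[v]$ once to read $\deg(v)$ and to find the minimum-weight edge incident to $v$ (breaking ties by a fixed rule, e.g.\ by endpoint index). Each newly discovered such edge is added to a forest $F$ and appended to the solution queue. The cut property of MSTs together with the tie-breaking rule ensures that $F$ is acyclic, contained in some MST of $G$, and of size $|F| \geq \lceil n/2 \rceil$ (each $F$-edge is selected by at most two of its endpoints). After all $n$ scans the algorithm knows $\maxdeg$ and spends the postulated $O(\maxdeg n)$ steps to evaluate $\totaltime$, and hence $d$. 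In the extension phase the reference MST algorithm is invoked as a black box, with every weight comparison involving an $F$-edge answered so that $F$-edges always compare as strictly lighter than non-$F$-edges; the returned tree $T$ is therefore an MST of $G$ that contains $F$, and its $n-1-|F|$ new edges are appended to the queue. Phase~3 drains whatever remains.

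For the amortized analysis I follow the pattern of \Cref{thm:uust-upper}, charging $\Theta(d)$ credits to every emission and letting the first few edges use only a weak lower bound on $d$ (derived from the degrees read so far) until $\totaltime$ has been evaluated. The total budget $\Theta(nd) = \Theta(\max(n\maxdeg, \totaltime))$ covers the Borůvka sweep, the $O(\maxdeg n)$ evaluation of $\totaltime$, the $\Theta(\totaltime)$ steps of the reference algorithm, and the $O(n)$ work of phase~3. Space is $\Theta(n)$ for $F$, the solution queue, and lazy-initialized vertex markers, plus $\totalspace$ for the black box. The main obstacle will be to argue that the solution queue never runs empty, in particular that the delay does not spike during the long internal computation of the black box: this is exactly why a full Borůvka round (yielding $|F| \geq n/2$) is essential rather than the cheaper single-edge pre-selection of \Cref{thm:uust-upper}, since pacing the phase~1 emissions at a small constant fraction of the natural $\Theta(d)$ rate keeps $\Omega(n)$ of the $F$-edges held back to cover the $\Theta(\totaltime/d) \leq n$ emission slots that must fall inside the black box's uninterrupted run.
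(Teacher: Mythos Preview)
Your proposal is correct and follows essentially the same approach as the paper: one Bor\r{u}vka round to collect at least $n/2$ MST edges, a black-box MST call with $F$-edges forced to compare as lightest so that the returned tree contains $F$, and held-back $F$-edges covering the emission slots that fall inside the black box's run. The paper's proof differs only in spelling out the phase-1 emission schedule concretely as a three-tier scheme (first $n/6$ selected edges at delay $\deg(u)+\deg(v)+O(1)$ while computing $\maxdeg$, next $n/6$ at delay $\Theta(\maxdeg)$ while computing $\totaltime$ and the global minimum weight, remainder at $\Theta(d)$) and in avoiding lazy-initialized memory altogether by storing selections in a linked list and recomputing a vertex's Bor\r{u}vka choice on the fly for duplicate detection rather than looking it up in an array.
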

\begin{proof}
	We describe the details of the enumeration algorithm in three phases as introduced in \Cref{subsec:upper-techniques}:
	
	\subparagraph*{Phase 1: Credit Accumulation}
	For each vertex $u$ in the order of increasing ID, the algorithm considers the edge $\{u,v\}$ with minimum edge weight; ties are broken in favor of the smallest vertex index $v$.
	As before, we call $u$ the \textit{origin} of the selected edge.
	If $u < v$, the edge is selected.
	Otherwise, for $u > v$, the algorithm only selects $\{u,v\}$, if $\{v,u\}$ is not already the selected edge for origin $v$; the algorithm decides this by recomputing the considered/selected edge for $v$.
	In addition to appending the selected edges to the enumeration algorithm's solution queue, they are also stored in a linked list $S$.
	At the end of the first phase, the algorithm transforms the list $S$ into an array $F$, setting $F[u] = v$ for each selected edge $\{u,v\}$ with origin $u$ and setting all other entries to \textsc{nil}.
	
	The first $\frac{n}{6}$ selected edges are emitted with delay in the order of the maximum degree of their two vertices.
	Simultaneously, the algorithm computes the maximum degree $\maxdeg$ of the whole graph.
	The next $\frac{n}{6}$ selected edges are emitted with delay in $\Theta(\maxdeg)$.
	Simultaneously, the algorithm computes the upper bound $\totaltime$ on the runtime of the MST algorithm used in the second phase.
	Additionally, the algorithm finds the minimum edge weight $W$ in the graph.
	All remaining edges are emitted with delay in $\Theta(\max(\maxdeg, \frac{\totaltime}{n}))$.
	
	\subparagraph*{Phase 2: Extension}
	In the second phase, all edges pre-selected in the credit accumulation phase are changed to have edge weight $W - 1$ (or, in the comparison-model, smaller than $W$).
	We call the graph with modified weights $G'$.
	
	The algorithm now runs an MST algorithm with runtime in $O(\totaltime)$ and space complexity in $O(\totalspace)$ to compute an MST for $G'$.
	We assume that the MST is produced in (or transformed with a DFS into) the form of a rooted tree stored in an array $T'$, where the arbitrary root $r$ has $T'[r] = \textsc{nil}$ and all other vertices $v$ store the edge $\{u,v\}$ that connects them to their parent $u$ as $T'[v] = u$.
	
	\subparagraph*{Phase 3: Output Finalization}
	Finally, the algorithm iterates over all edges in $F$ from the credit accumulation phase, sets the corresponding entries in the tree $T'$ from the extension phase to \textsc{nil} and enqueues for all remaining non-\textsc{nil}-entries $T'[v] = u$ the edge $\{u,v\}$ to the solution queue.
	
	\medskip
	
	We again first prove that the presented algorithm produces a valid spanning tree and emits each edge exactly once, before analyzing the algorithm's time and space complexity.
	
	\subparagraph*{Correctness}
	
	The edge selection in the credit accumulation phase is equivalent to the first iteration of modern formulations of Borůvka's algorithm with tie breaking \cite{boruvkaJistemProblemuMinimalnim1926,ericksonAlgorithms2019}.
	Thus, $k \leq n-1$ edges are selected and they form a forest $F$ that can be extended to an MST of~$G$.
	As each vertex is adjacent to at least one selected edge, $k \geq \frac{n}{2}$ edges are selected and enqueued for later output.
	
	The correctness of the used black-box MST algorithm in the extension phase implies that the tree $T'$ is a minimum spanning tree of $G'$.
	It remains to show that $F \subseteq T'$ and that $T'$ is also a minimum spanning tree of the original graph $G$.
	We formulate the proof here in terms of a new, fixed edge weight.
	Note that the arguments still hold in the comparison model, as we only require that the pre-selected edges compare as lighter when tested against any other edges.
	
	Assume $F \nsubseteq T'$, implying that there is at least one edge $e \in F$ with $e \notin T'$.
	Because $T'$ is a spanning tree, adding $e$ to $T'$ closes a cycle.
	As $F$ is a forest, there has to be at least one edge $e'$ in the cycle with $e' \notin F$.
	Let $w'(e)$ and $w'(e')$ be the weights of these two edges in $G'$.
	The weight of edge $e \in F$ was adjusted, thus we have $w'(e) = W - 1$, where $W$ is the smallest edge weight in $G$.
	Because $e' \notin F$, we have $w'(e') \geq W$ and therefore $w'(e') > w'(e)$.
	It follows that $(T' \cup \{e\}) \setminus \{e'\}$ is a spanning tree of $G'$ with smaller total weight than $T'$.
	But $T'$ is an MST of $G'$, therefore the initial assumption must be wrong and $F \subseteq T'$ holds.
	
	As the edge weight updates did not change the structure of the graph and $T'$ is a spanning tree of $G'$, it also is a spanning tree of the original graph $G$.
	Recall that the forest $F$ can be extended to an MST of $G$.
	Let $T$ be such an MST with weight $w(T)$ in $G$ and denote the weight of $T'$ in $G$ with $w(T')$.
	As $T$ is an MST in $G$, we have $w(T) \leq w(T')$.
	As $F\subseteq T \cap T'$, changing exactly the edge weights of $F$ in the construction of $G'$ reduces $w(T)$ and $w(T')$ by the same amount~$x$.
	The tree $T'$ is an MST in $G'$ which thus implies $w(T') - x \leq w(T) - x$.
	So, $w(T') = w(T)$ and $T'$ is an MST in $G$.
	
	The output finalization phase appends the missing edges $T' \setminus F$ to the solution queue.
	Therefore, each edge of the MST $T'$ is emitted exactly once by the enumeration algorithm.
	
	\subparagraph*{Time Complexity}
	
	As before we apply the accounting method to analyze the delay runtime.
	The initial balance is $\max(\maxdeg, \frac{\totaltime}{n})$.
	
	Each of the first $\frac{n}{6}$ selected edges $\{u,v\}$ receive $\deg(u) + \deg(v) + 6$ credits.
	This pays for the $O(\deg(u) + \deg(v))$ steps to consider and select the edge for origin $u$ and potentially consider and not select the edge for origin $v$ in a later iteration.
	(Note that not selecting a considered edge can only happen once per edge and only if the exact edge has been selected before, as the vertices are processed with increasing ID and the tie breaking decides towards the smaller ID.)
	The credits also pay for the $O(n)$ steps to compute the maximum degree $\maxdeg$.
	So after this step, the balance is again at least $\max(\maxdeg, \frac{\totaltime}{n})$.
	
	The next $\frac{n}{6}$ selected edges receive $(1 + 6 + 6)\maxdeg$ credits each.
	This pays for $O(\maxdeg)$ steps to consider and select the edge and, potentially, not selecting it again in a later iteration.
	It also pays for the $O(\maxdeg n)$ steps to compute the upper bound $\totaltime$ on the MST algorithm used in the extension phase and for the $O(m)$ steps to find the minimum edge weight $W$.
	After this step, the balance is again at least $\max(\maxdeg, \frac{\totaltime}{n})$.
	
	All remaining at least $\frac{n}{6}$ selected edges receive $\maxdeg + 6\frac{\totaltime}{n} + 12$ credits.
	This pays for $O(\maxdeg)$ steps to consider and select the edge and, potentially, not select it again in a later iteration.
	After this step, the balance is at least $\totaltime + 2n$ credits.
	
	As last step in the credit accumulation phase, $n$ of those credits pay for transforming the linked list $S$ of selected edges to the array $F$, leaving $\totaltime + n$ credits that remain for the next two phases.
	
	The extension phase runs in $O(\totaltime)$; including the time needed to update edge weights:
	Observe that the algorithm can check in constant time for any edge $\{u,v\}$ whether it was selected in phase~1 by testing $F[u] = v$ and $F[v] = u$.
	Thus, the algorithm can either modify the input $G$ by iterating over all edges once and replacing the selected edges' weights, or proxy every edge weight comparison of the following operations to have the pre-selected edges compare as lighter to the remaining ones.
	
	In the output finalization phase, the algorithm iterates over $F$ once in $O(n)$ time.
	Additional edges appended to the solution queue in this phase receive no credit.
	
	All edges receive at most $13\maxdeg + 6\frac{\totaltime}{n} + 12$ credits, and never more than is warranted by the current knowledge of the algorithm about $\maxdeg$ and $\totaltime$.
	Thus the delay is in $O(\max(\maxdeg, \frac{\totaltime}{n}))$.
	
	\subparagraph*{Space Complexity}
	
	In the credit accumulation phase, both the linked list $S$ as well as the array $F$ need $\Theta(n)$ memory.
	Note that having $S$ and building the array $F$ at the end of the phase removes any need for lazy-initialized memory.
	The black-box MST algorithm requires $O(\totalspace)$ space.
	The output finalization phase does not require additional memory.
	Therefore, the overall memory consumption is in $O(n + \totalspace)$.
\end{proof}

For graphs with large $\maxdeg$ one can trade in $\Theta(n)$ preprocessing time to match the smaller delay lower bound of $O(\frac{\totaltime}{n})$:
The algorithm goes through the vertices in order of increasing degree to build enough head start before dealing with high-degree vertices later.

\begin{restatable}{theorem}{thmmstalgoupper}
	\label{thm:mst-algo-upper}
	Let $\totaltime$ be an upper bound on the runtime of an MST algorithm such that $\totaltime$ is computable in $O(\avgdeg n)$ steps and let $\totalspace$ be its space complexity.
	Enumerating the edges of an MST of an edge-weighted connected graph $G$ can be done with preprocessing in $\Theta(n)$, with delay in $O(\frac{\totaltime}{n})$, and with space complexity in $\Theta(n + \totalspace)$.
\end{restatable}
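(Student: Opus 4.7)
The plan is to reuse the three-phase structure of \Cref{thm:mst-maxdeg-upper} and only change the order in which the credit accumulation phase processes vertices: instead of increasing ID, I traverse them in order of increasing degree. The $\Theta(n)$ preprocessing budget is spent computing $\avgdeg$, initializing the predecessor array $F$, and most importantly sorting the vertices by degree via counting sort in $O(n+\maxdeg) = O(n)$. Phases~1, 2, and~3 then operate exactly as in \Cref{thm:mst-maxdeg-upper} (including the reweighting trick and the invocation of the black-box MST algorithm), so correctness is immediate: the set of selected edges in phase~1 is the Borůvka first-iteration forest, independent of processing order, and is contained in the MST returned in phase~2.

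The key analytical ingredient is the inequality
\[
	t_i = \sum_{j \leq i} d_j \leq i \cdot \avgdeg \leq 2i \cdot \totaltime/n,
\]
where $d_1 \leq \ldots \leq d_n$ is the sorted degree sequence, $t_i$ is the cumulative work done during phase~1 by the time the first $i$ vertices have been processed, the first inequality is the elementary fact that the average of the $i$ smallest values of a sequence never exceeds the overall average, and the second uses $\totaltime \geq m = n\avgdeg/2$. Combined with the observation that every processed vertex is incident to a selected edge (so $\geq \lceil i/2 \rceil$ edges sit in the solution queue after the first $i$ vertices), this tells us that emitting at any delay in $O(\totaltime/n)$ cannot empty the queue during phase~1. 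The remaining phase-1 selected edges and the $\leq n/2$ edges produced by phase~2 can then be spread over the $O(\totaltime)$ work of phases~2 and~3 at the same target delay, so the overall bound is $O(\max(\avgdeg, \totaltime/n)) = O(\totaltime/n)$.

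The main obstacle will be calibrating the emission rate: the optimal delay $\Theta(\totaltime/n)$ cannot simply be used from the start, because $\totaltime$ takes $O(\avgdeg\, n) = O(m)$ steps to compute, which exceeds the preprocessing budget. I would adopt the staged credit scheme of \Cref{thm:mst-maxdeg-upper}: emit with delay in $\Theta(\avgdeg)$ (a known and provably $O(\totaltime/n)$ quantity) while $\totaltime$ is being computed in the background during phase~1, then switch to delay in $\Theta(\max(\avgdeg, \totaltime/n))$ once $\totaltime$ is known. The final accounting is a line-by-line adaptation of the proof of \Cref{thm:mst-maxdeg-upper}, with the sorted-order bound $t_i \leq i\avgdeg$ replacing the earlier per-vertex bookkeeping. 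The space complexity stays at $\Theta(n + \totalspace)$, with counting-sort buckets adding only $O(\maxdeg) \subseteq O(n)$.
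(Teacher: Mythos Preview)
Your proposal is correct and follows essentially the same approach as the paper: sort the vertices by degree during the $\Theta(n)$ preprocessing (the paper uses $n$ buckets, you use counting sort---both $O(n)$), then run the three phases of \Cref{thm:mst-maxdeg-upper} verbatim, relying on the partial-sum inequality $\sum_{j\le i} d_j \le i\cdot\avgdeg$ to guarantee the solution queue never empties while emitting at rate $\Theta(\avgdeg)$, and switching to $\Theta(\totaltime/n)$ once $\totaltime$ is known. The paper additionally spells out the index-translation arrays $A,B$ so the black-box algorithm can be called on the virtually reordered graph, but otherwise your write-up matches the paper's argument point for point.
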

\begin{proof}
	During preprocessing, our algorithm computes the average degree $\avgdeg$.
	It then sorts the vertices by ascending vertex degree by creating an array of $n$ buckets, one for each possible degree.
	After putting each vertex in the according bucket, iterating over the array and its buckets once gives the sorted sequence in $\Theta(n)$ time.
	
	The algorithm then implicitly renames the vertices according to this new order, by building two translation arrays $A, B$ of length $n$ such that for a vertex with old ID $v$ and new ID $v'$ it holds that $A[v] = v'$ and $B[v'] = v$.
	Using $A, B$ as indirection, all following computation steps can access $G$ as if the vertices were renamed and all edge references were updated accordingly.
	
	The rest of the algorithm is identical to the algorithm of \Cref{thm:mst-maxdeg-upper}, except for the output delay:
	The first $\frac{n}{3}$ edges are emitted with delay in $\Theta(\avgdeg)$ and receive $\avgdeg$ credits.
	All remaining edges receive $\Theta(\frac{\totaltime}{n})$ credits to be emitted with delay in $\Theta(\max(\avgdeg, \frac{\totaltime}{n})) = \Theta(\frac{\totaltime}{n})$.
	
	With the vertices reordered by increasing degree it holds that for all $1 \leq x \leq n$ we have
	\(\frac{1}{x} \cdot \sum_{v = 1}^{x} \deg(v) \leq \avgdeg\).
	Therefore, emitting with delay in the order of the average degree is indeed possible in the first phase without running out of solution parts / credits.
\end{proof}

\Cref{fig:mst-summary} summarizes the tradeoff between preprocessing time and delay for the enumeration of edges of a minimum spanning tree for an edge-weighted graph.
The figure highlights a gap for preprocessing in $\Omega(\maxdeg) \cap o(n)$ and delay in $\Omega(\frac{\totaltime}{n}) \cap o(\maxdeg + \frac{\totaltime}{n})$, where we do not yet know whether enumeration is possible or not.

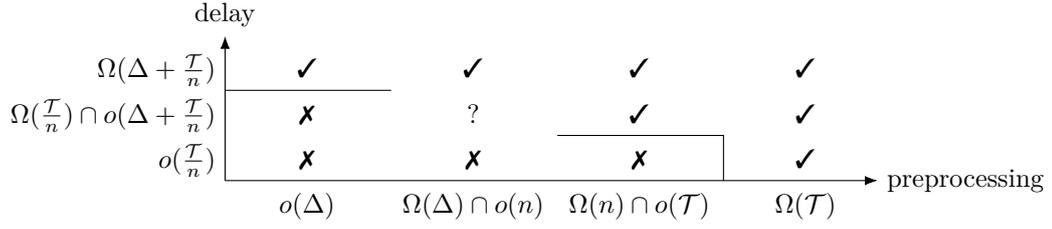
\begin{figure*}[t]
	\centering
	\begin{tikzpicture}[
			cell/.style={minimum width=2.2cm, minimum height=0.6cm},
			column 1/.style={anchor=base east, nodes={minimum width=0cm}},
		]

		\node [matrix of nodes, ampersand replacement=\&, nodes in empty cells, nodes={cell},column sep=-\pgflinewidth, row sep=-\pgflinewidth,text depth=0.5ex,text height=2ex] (m) {
			$\Omega(\maxdeg + \frac{\totaltime}{n})$ \& \cmark \& \cmark \& \cmark \& \cmark \\
			$\Omega(\frac{\totaltime}{n}) \cap o(\maxdeg + \frac{\totaltime}{n})$ \& \xmark \& ? \& \cmark \& \cmark \\
			$o(\frac{\totaltime}{n})$ \& \xmark \& \xmark \& \xmark \& \cmark \\
			\& $o(\maxdeg)$ \& $\Omega(\maxdeg) \cap o(n)$ \& $\Omega(n) \cap o(\totaltime)$ \& $\Omega(\totaltime)$ \\
		};
		
		\draw
			(m-4-4.north east) -- (m-3-4.north east) -- (m-3-3.north east)
			(m-2-1.north east) -- (m-2-2.north east)
		;

		\node (rightlabel) at ($(m-4-5.north east)+(1,0)$) {preprocessing};
		\draw (m-4-1.north east) edge[->,>=Latex] (rightlabel);
		\node (uplabel) at ($(m-1-1.north east)+(0,0.4)$) {delay};
		\draw (m-4-1.north east) edge[->,>=Latex] (uplabel);
	\end{tikzpicture}
	\caption{Tradeoff between preprocessing and delay for MST enumeration with a black-box MST algorithm with total time~$\totaltime$.}
	\label{fig:mst-summary}
\end{figure*}

\subsection{Directed Unweighted Spanning Trees}
\label{sec:dust}

We analyze two variants of computing a directed spanning tree (DST) for a graph that is guaranteed to have one:
First, if a suitable root vertex $r$ for such a DST is given, second without given root.

Given a root $r$, the search tree of a depth first search started in $r$ is a directed spanning tree \cite{cormenIntroductionAlgorithms2022}.

\begin{corollary}
	\label{cor:dst-with-r-upper}
	Given an unweighted directed graph $G$ that has a directed spanning tree rooted in a given vertex $r$, such a DST can be computed in $O(m)$ by a DFS.
\end{corollary}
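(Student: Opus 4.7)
The plan is to run a standard depth first search starting from the given root $r$, following edges in their forward direction, and to record for each newly discovered vertex $v \neq r$ the edge $(u,v)$ along which $v$ was discovered. I would then argue that the resulting set of discovery edges is a DST rooted at $r$ and that the whole procedure runs in linear time.

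First I would show that every vertex is discovered. Because $G$ admits some DST rooted at $r$, for every vertex $v$ there is a directed path from $r$ to $v$ in $G$; a straightforward induction on the prefix length of such a path shows that DFS from $r$ discovers $v$. Hence the DFS collects exactly $n-1$ discovery edges, one per non-root vertex.

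Next I would verify the DST properties of the collected edge set $E'$. By construction each $v \neq r$ has exactly one incoming edge in $E'$ (the one via which it was discovered), and $r$ has $\indeg(r) = 0$ in $E'$. Acyclicity follows because discovery times strictly increase along discovery edges, so no directed cycle can close up inside $E'$; alternatively, since $|E'| = n-1$ and every vertex has $\indeg \leq 1$ with $r$ being the unique vertex of indegree $0$, the subgraph is the desired out-branching. This matches the definition of a DST rooted at $r$ given in \Cref{sec:preliminaries}.

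For the runtime, each vertex is pushed onto the DFS stack once and each outgoing adjacency is scanned at most once, giving $O(m+n)$ steps; since $G$ is weakly connected (a DST spans it) we have $m \geq n-1$, so this simplifies to $O(m)$. The only nontrivial step is the reachability argument, but that is an immediate consequence of the assumed existence of a DST rooted at $r$, so I expect no real obstacle.
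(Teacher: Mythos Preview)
Your proposal is correct and follows the standard DFS-tree argument. Note that the paper does not actually prove this corollary; it simply asserts, with a citation to Cormen et al., that the search tree of a DFS started in $r$ is a directed spanning tree. Your write-up spells out exactly that textbook argument (reachability from $r$, one incoming discovery edge per non-root vertex, acyclicity via discovery times, linear running time), so there is no meaningful difference in approach.
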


Without given root this runtime is still achievable, as one can find a suitable root in $O(m)$:
Build the acyclic graph $G^\textrm{SCC}$ in which each strongly connected component is contracted to a single node~\cite{cormenIntroductionAlgorithms2022}.
Pick any vertex in the unique source of $G^\textrm{SCC}$ as root.

\begin{restatable}{theorem}{thmdstupper}
	\label{thm:dst-upper}
	Given an unweighted directed graph $G$ that has a directed spanning tree, such a DST can be computed in $O(m)$.
\end{restatable}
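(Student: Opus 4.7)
The plan is to reduce the rootless case to the case of \Cref{cor:dst-with-r-upper} by identifying a suitable root in linear time. First I would compute the strongly connected components of $G$ using a standard linear-time algorithm (e.g.\ Tarjan's or Kosaraju's), and contract each SCC to a single node to obtain the condensation $G^{\mathrm{SCC}}$, which is a DAG. Both steps run in $O(m)$, and building $G^{\mathrm{SCC}}$ with its adjacency list takes $O(m)$ as well (each edge of $G$ contributes to at most one edge of $G^{\mathrm{SCC}}$, with duplicates filtered out in $O(m)$ using a per-SCC marker array).

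Next I would argue that $G^{\mathrm{SCC}}$ contains exactly one source (i.e.\ one SCC with in-degree $0$). Existence of a source is immediate because $G^{\mathrm{SCC}}$ is a finite DAG. For uniqueness I use the assumption that $G$ admits some directed spanning tree rooted at some vertex $r^\star$: then $r^\star$ reaches every vertex of $G$ in $G$, hence the SCC $C^\star$ containing $r^\star$ reaches every SCC in $G^{\mathrm{SCC}}$. Any source SCC $S$ must therefore coincide with $C^\star$, since otherwise $S$ would be reachable from $C^\star$ via a non-empty path, contradicting $S$ having in-degree $0$. Thus the source is unique; call it $C^\star$.

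I would then pick an arbitrary vertex $r$ of $C^\star$ as root. Because any two vertices in the same SCC are mutually reachable, $r$ reaches every vertex of $C^\star$; and because $C^\star$ is the unique source of the DAG $G^{\mathrm{SCC}}$, every other SCC is reachable from $C^\star$, hence from $r$. So $r$ reaches every vertex of $G$. Finally I invoke \Cref{cor:dst-with-r-upper} on $G$ with this root $r$ to obtain a DST in $O(m)$; since $r$ reaches every vertex, the DFS tree rooted at $r$ spans $V$ and is therefore a directed spanning tree.

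The only subtle step is the uniqueness of the source SCC, and that is precisely where the hypothesis \emph{$G$ has a directed spanning tree} is used; everything else is a standard composition of linear-time primitives, yielding the claimed $O(m)$ total runtime.
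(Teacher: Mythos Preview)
Your proposal is correct and follows essentially the same approach as the paper: compute the SCCs, form the condensation $G^{\mathrm{SCC}}$, argue that the hypothesis ``$G$ has a DST'' forces a unique source SCC, pick any vertex in that SCC as root, and invoke \Cref{cor:dst-with-r-upper}. The only cosmetic difference is the uniqueness argument---you argue directly that the root's SCC must dominate every source, while the paper argues by contradiction that two sources would preclude a common root---but both are the same idea.
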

\begin{proof}
	The following steps compute a DST and  have a runtime in $O(m)$:
	\begin{enumerate}
		\item Find all strongly connected components (SCCs) of $G$ with \cite{tarjanDepthFirstSearchLinear1972,cormenIntroductionAlgorithms2022}.
		\item Construct the acyclic component graph $G^\textrm{SCC}$, in which each SCC is contracted to a single vertex \cite{cormenIntroductionAlgorithms2022}.
		\item Find the unique source $S$ in $G^\textrm{SCC}$.
		\item Chose any vertex $r \in S$ and return the DST produced by a DFS on $G$ started on~$r$.
	\end{enumerate}
	
	We only need to argue the correctness and runtime of steps~3 and~4: First, assume towards contradiction that $G^\textrm{SCC}$ has two sources $S$ and $S'$.
	By definition of SCCs and sources, a vertex $u \in S$ and a vertex $v \in S'$ are not reachable in $G$ from any vertex outside their own SCC.
	Thus, there is no common root $r$ with a path to both $u$ and $v$ and therefore there is no directed spanning tree in $G$.
	
	Thus step 3 correctly finds a  unique source $S$ in  $G^\textrm{SCC}$.
	Since $S$ is a strongly connected component, there is a path from every vertex $r \in S$ to every other vertex $x \in S$.
	As there is no other source but $S$ in the component graph, there is a path from $S$ to all other SCCs in $G^\textrm{SCC}$ -- otherwise $G$ would not be disconnected, again contradicting the existence of a directed spanning tree. 
	It follows that there is also a path from every vertex $r \in S$ to every other vertex $x \in G$.
	So, any $r \in S$ is a valid source for a DST in $G$ and by \Cref{cor:dst-with-r-upper} step~4 is correct and runs in $O(m)$.
	
	Finding $S$ requires at most $O(m)$ steps by transposing the component graph and looking for vertices without outgoing edges.
\end{proof}

The lower bound from \Cref{thm:uust-lower} for undirected graphs extends to the directed case:
\begin{corollary}
	\label{cor:dst-lower}
	Given is an unweighted directed graph $G$ that has a DST.
	No algorithm can compute a DST in $o(m)$.
	No algorithm can enumerate a DST with preprocessing in $o(m)$ and delay in $o(\avgoutdeg)$.
\end{corollary}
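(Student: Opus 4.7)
The plan is to transfer the adversarial construction from \Cref{thm:uust-lower} to the directed setting with essentially no modification. I would take the same two-clique graph joined by two bridge edges and replace every undirected edge by a pair of antiparallel directed edges to form a directed graph $G'$. Because $G'$ is strongly connected, it admits a DST rooted at any vertex, so the precondition of the corollary is satisfied; the edge count only doubles, so the target bounds remain $\Omega(m)$ and $\Omega(\avgoutdeg)$.

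Next, I would invoke the adversary interface of \Cref{subsec:lower-techniques}, whose queries return out-degrees and out-adjacencies. In $G'$ the out-degree of every vertex coincides with its undirected degree in the original instance, so the adversary can re-use the strategy from \Cref{thm:uust-lower} verbatim: answer every neighbor query on a clique vertex with a directed edge into a same-component vertex, until the algorithm has almost fully exhausted one component.

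The concluding step mirrors the undirected proof. Any DST of $G'$, no matter which vertex it is rooted at, must contain at least one directed edge crossing between the two components, and the only such candidates are the four directed edges produced by doubling the two original bridges. Before the algorithm has examined the full out-adjacency of at least $\lfloor n/2 \rfloor - 2$ vertices in one clique, the adversary retains the freedom to permute labels inside that clique and thereby invalidate whichever crossing edge the algorithm commits to. This yields the $\Omega(m)$ total-time lower bound for computing a DST. The enumeration half then follows from the averaging argument in \Cref{subsec:lower-techniques}: preprocessing in $o(m)$ together with delay in $o(\avgoutdeg)$ would simulate a total-time DST algorithm running in $o(m) + (n-1)\cdot o(\avgoutdeg) = o(m)$, contradicting the first half.

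The only step requiring real care---and the one I would single out as the main obstacle---is verifying that every adversarial swap of bridge endpoints inside a clique still produces a graph admitting a DST, so that the information revealed so far is consistent with a valid input. This is immediate for the doubled construction, since permuting vertex labels within one strongly connected component preserves strong connectivity of $G'$; however, it is the piece that does not transport entirely automatically from the undirected setting, where mere connectedness sufficed.
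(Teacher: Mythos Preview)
Your proposal is correct and matches the paper's approach exactly. The paper presents this statement as an immediate corollary with no written proof, merely noting that ``the lower bound from \Cref{thm:uust-lower} for undirected graphs extends to the directed case''; your write-up supplies precisely the details one would expect for that extension---bi-orienting the two-clique bridge instance, observing that out-degrees coincide with the original degrees so the adversary strategy carries over unchanged, and then deriving the enumeration bound from the total-time bound via the standard averaging argument of \Cref{subsec:lower-techniques}.
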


For the total-time variant this is already a tight bound.
We can, however, prove a stronger lower bound for the enumeration of directed spanning trees without given root, that rules out any meaningful enumeration even for dense graphs.
The core idea of the proof is that any vertex in a DST is the endpoint of at most one edge.
Were an algorithm to emit an edge $(u,v)$ early, the adversary could point key edges (in \Cref{fig:dste-lower} these are crossing edges between two cliques), to vertex $v$; and one of these key edges has to be included in any DST, making $(u,v)$ wrong.

\begin{restatable}{theorem}{thmdstelower}
	\label{thm:dste-lower}
	Given is an unweighted directed graph that has a DST.
	Without given root, no algorithm can enumerate such a DST with both preprocessing and delay in $o(m)$.
\end{restatable}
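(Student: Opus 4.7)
The plan is to carry out an adversary argument in the style of \Cref{subsec:lower-techniques} that shows no algorithm can emit its first edge in $o(m)$ time; this immediately yields the claim, since preprocessing plus the first delay lower-bounds time-to-first-output. I will answer the algorithm's degree and out-neighbor queries in a way that keeps enough parameters of the input graph unfixed so that, after $o(m)$ queries, many distinct DST-admitting graphs are still consistent with the answer history, and for any edge $(u,v)$ the algorithm might propose there is still a consistent graph in which $(u,v)$ belongs to no DST.

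The family of hard instances consists of two bidirectional cliques $A$ and $B$ of size $n/2$, together with $\Theta(n)$ crossing edges in each direction playing the role of the \emph{key edges} suggested by the figure: all $A\to B$ key edges share a single target $t_B\in B$, all $B\to A$ key edges share a single target $t_A\in A$, and the identities of $t_A$, $t_B$, together with which vertices act as key-edge sources, are left unfixed. I balance the construction so that every vertex has the same out-degree --- each source trades one within-clique out-edge for the one crossing it contributes --- making degree queries uninformative. The adversary answers each neighbor query by revealing an as-yet-unrevealed within-clique neighbor, and is only forced to disclose a vertex's key edge once essentially all of that vertex's within-clique neighborhood, of size $\Theta(\maxoutdeg)=\Theta(n)$, has been queried. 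Consequently, an algorithm using $o(m)$ total queries can exhaust neighborhoods of at most $o(n)$ vertices, so neither target slot $t_A$ nor $t_B$ is pinned down and at least one endpoint of every revealed edge still has its source/non-source label open.

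The key structural observation about the family is that in any DST rooted inside $A$ the parent of $t_B$ is forced to be an $A$-side key-edge source, and symmetrically for $t_A$ in any DST rooted inside $B$. Given an emission $(u,v)$, the adversary invalidates it by a case distinction. If it can still commit $u,v$ to the same clique, say $A$, it does so, declares $u$ a non-source, sets $t_A = v$, and chooses the root inside $B$; then every DST of the committed graph routes $v$'s incoming DST edge through a $B$-side key-edge source, excluding $(u,v)$. If the revealed context instead forces $(u,v)$ to be a crossing edge, the adversary places the root in the clique of $v$ rather than that of $u$, so that the DST uses only key edges in the opposite direction of $(u,v)$, again excluding $(u,v)$ from every DST.

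The main obstacle will be the bookkeeping around consistency: I must verify that the uniform out-degree, the missing-within-clique trick, and the within-clique-first revelation discipline simultaneously leave open all of the parameters that the invalidation move needs --- the targets $t_A$ and $t_B$, the source/non-source label of $u$, and the clique membership of $u$ and $v$ --- for \emph{every} $o(m)$-query history, and that the post-emission recommitment of root, targets, and source set never contradicts an earlier response. Once this accounting is in place, the adversary defeats any algorithm trying to emit an edge in $o(m)$ time, and the theorem follows.
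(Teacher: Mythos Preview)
Your adversary construction has a structural flaw that breaks the invalidation step. By committing to $\Theta(n)$ crossing edges \emph{in each direction}, every graph in your family is strongly connected, so every vertex is a valid DST root. But the adversary does not get to ``choose the root inside $B$'': the root is determined by the algorithm's eventual output, not by the adversary. Concretely, take your same-clique case with $u,v\in A$, $u$ a non-source, and $t_A=v$. Your family still contains $A\to B$ key edges, so there is a DST rooted at $u$ that uses $(u,v)$ directly (reach the rest of $A$ from $u$, reach $t_B$ via some $A$-source, reach the rest of $B$ from $t_B$). Hence $(u,v)$ lies in \emph{some} DST of the committed graph, and the emission is not wrong. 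The same issue undermines your crossing case: with bidirectional key edges, rooting in $v$'s clique does not prevent a DST from first crossing to $u$'s clique and then returning via $(u,v)$.

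The paper avoids this by making the crossing edges \emph{unidirectional}: two bi-directed cliques joined by exactly two edges $(s,v)$ and $(t,v)$ from one clique into a single vertex $v$ of the other (with the edge between $s$ and $t$ removed to equalize out-degrees). Now every DST is forced to be rooted in the clique containing $s,t$, and $v$'s tree-parent is necessarily $s$ or $t$. After $o(m)$ queries the adversary has revealed only intra-clique edges and can still pick which clique hosts $s,t$ and which vertex plays the role of $v$; it places $s,t$ in the clique \emph{not} containing the emitted edge's endpoints and sets $v$ to the emitted edge's head, so the emission contradicts the in-degree-one constraint at $v$. Your argument can be repaired along these lines by dropping the bidirectional key edges and letting the adversary choose the single crossing direction after the emission; the uniform-degree bookkeeping then only needs the two missing intra-clique edges at $s,t$ rather than your $\Theta(n)$ traded edges.
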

\begin{proof}
	Consider a graph  consisting of two bi-directed cliques with sizes $\lfloor \frac{n}{2} \rfloor$ and $\lceil \frac{n}{2} \rceil$, where one of the cliques is missing the edges between two vertices $s$ and $t$, and with two additional edges from $s$ and $t$ to a vertex $v$ in the other clique (see \Cref{fig:dste-lower}).
	
	Assume now, an algorithm emits an edge $(u,v)$ after at most $(\lfloor \frac{n}{2} \rfloor-2)(\lfloor \frac{n}{2} \rfloor-1)$ queries.
	At this point in time, there are at least two vertices per clique for which the algorithm has not queried all outgoing edges.
	Therefore, until this point, the adversary is able to only reveal edges within the cliques.
	So, the vertices $u$ and $v$ must belong to the same clique, or the adversary could place the crossing edges such that $(u,v)$ does not exist at all.
	Additionally the adversary can ensure that in each clique there are at least two vertices that are not yet connected by a queried edge in either direction.
	Let $r,s$ be such two vertices in the clique not containing $u,v$.
	Note that within each clique, all vertices have the same out-degree, so the algorithm cannot identify $s$ and $t$ using out-degree queries.
	
	Now, the adversary can place two edges $(r,v)$ and $(s,v)$.
	Any directed spanning tree of the graph must use exactly one of them.
	However, together with the emitted edge $(u,v)$, vertex $v$ would then have two incoming edges in the algorithm's output, contradicting the definition of a directed spanning tree.
	This shows, that the emitted solution part is wrong.
\end{proof}
\begin{figure}[t]
	\centering
	\begin{tikzpicture}[vertex/.append style = {minimum width=1.8em}]
		\foreach \v/\l in {1/,2/$v$,3/$u$,4/}{
			\node[vertex] (a\v) at (162-72*\v:1.15cm) {\l};
		}
		\node[vertex,draw=none] (a5) at (162:1.15cm) {$\cdots$};
		\foreach \u in {1,...,4}{
			\pgfmathtruncatemacro{\unext}{\u+1}
			\foreach \v in {\unext,...,5}{
				\draw[lightgray] (a\u) edge[directed,bend left=5] (a\v);
				\draw[lightgray] (a\v) edge[directed,bend left=5] (a\u);
			}
		}

		\begin{scope}[xshift=5cm]
			\foreach \v/\l in {1/,3/,4/$t$,5/$s$}{
				\node[vertex] (b\v) at (162-72*\v:1.15cm) {\l};
			}
			\node[vertex,draw=none] (b2) at (18:1.15cm) {$\cdots$};
			\foreach \u in {1,...,3}{
				\pgfmathtruncatemacro{\unext}{\u+1}
				\foreach \v in {\unext,...,4}{
					\draw[lightgray] (b\u) edge[directed,bend left=5] (b\v);
					\draw[lightgray] (b\v) edge[directed,bend left=5] (b\u);
				}
			}
			\foreach \v in {1,...,3}{
				\draw[lightgray] (b5) edge[directed,bend left=5] (b\v);
				\draw[lightgray] (b\v) edge[directed,bend left=5] (b5);
			}
			\draw (b5) edge[directed] (a2);
			\draw (b4) edge[directed] (a2);
		\end{scope}
	\end{tikzpicture}
	\caption{The prototype graph for the adversary for the proof of \Cref{thm:dste-lower}. A DST enumeration algorithm cannot emit an edge $(u,v)$ before having found the crossing edges.}
	\label{fig:dste-lower}
\end{figure}
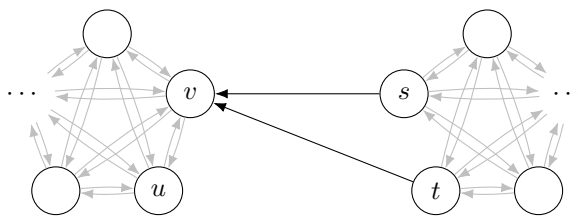

\Cref{cor:dst-with-r-upper} and \Cref{thm:dst-upper} show that providing a DST root or not does not change the time complexity in the total-time model.
This is different in the enumeration variant, as with given root, one \textit{can} enumerate the edges of a DST with  delay in $O(\maxoutdeg)$ by adjusting the BFS variation given in the proof of Theorem~1 in~\cite{caselShortestDistancesEnumeration2024}.
Starting this shortest distance enumeration with root $s$, we also store and then emit the edges used to reach a vertex on a shortest path from $s$. This adjustment outputs a shortest distances tree, thus in particular a DST, and immediately yields the following result.
\begin{corollary}
	\label{cor:dste-with-r-upper}
	Given an unweighted directed graph $G$ that has a directed spanning tree rooted in a given vertex $r$, such a DST can be enumerated with delay in $O(\maxoutdeg)$, without preprocessing, with $\Theta(n)$ lazy-initialized memory and space complexity in $\Theta(n)$.
\end{corollary}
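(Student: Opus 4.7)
The plan is to adapt the BFS-based shortest-distance enumeration algorithm from Theorem~1 of \cite{caselShortestDistancesEnumeration2024} so that, instead of merely emitting a vertex together with its distance from $r$, we also emit the tree edge through which that vertex was discovered. The existence of a DST rooted at $r$ guarantees that every vertex is reachable from $r$, so a BFS started at $r$ will discover all $n$ vertices.

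First I would describe the underlying traversal: starting at $r$, maintain a FIFO queue of vertices, and for each dequeued vertex $v$ iterate through its out-adjacency list $G[v]$, pushing each previously undiscovered neighbor $w$ onto the queue and recording the pair $(v,w)$ as the DST edge that discovers $w$. Because each vertex is discovered exactly once, the collected set of $n-1$ directed edges gives every non-root vertex exactly one incoming edge, is acyclic (edges only point from earlier-discovered to later-discovered vertices), and is weakly connected by construction. Thus it is a valid directed spanning tree rooted at~$r$.

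Next I would transfer the complexity bounds. The algorithm from \cite{caselShortestDistancesEnumeration2024} uses a lazy-initialized array of size $\Theta(n)$ to mark discovered vertices and a FIFO queue of size at most $n$, with no preprocessing, and amortizes the $O(\outdeg(v))$ work spent processing each $v$ against the credits earned when $v$ is emitted, giving delay in $O(\maxoutdeg)$. In our adaptation the discovery edge $(v,w)$ is appended to the solution queue at exactly the moment $w$ would have been emitted by the original algorithm, so the delay, preprocessing, and $\Theta(n)$ lazy-initialized memory bounds carry over verbatim.

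The only subtle point, and likely the main obstacle, is making sure the overall space stays in $\Theta(n)$ even when the solution queue holds back edges. I would store each discovery edge directly together with the discovering vertex in the FIFO queue (or, equivalently, in a predecessor array $T$ indexed by vertex), so that at most $n-1$ edge references ever exist simultaneously and no auxiliary structure grows beyond $\Theta(n)$. With this single bookkeeping choice the adaptation inherits all guarantees of the cited shortest-path enumeration and yields exactly the bounds claimed.
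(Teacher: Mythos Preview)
Your proposal is correct and follows essentially the same approach as the paper: the paper also derives this corollary by adapting the BFS-based shortest-distance enumeration of Theorem~1 in \cite{caselShortestDistancesEnumeration2024}, additionally storing and emitting the edge used to reach each vertex so that the output is a shortest-path tree and hence a DST. Your write-up simply makes explicit the correctness argument and the $\Theta(n)$ space bookkeeping that the paper leaves implicit.
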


Note that this leaves a gap to the $\Omega(\avgoutdeg)$ delay lower bound from \Cref{cor:dst-lower}.
While the $\Omega(\maxdeg)$ delay lower bound for single source shortest distances from \cite{caselShortestDistancesEnumeration2024} does not transfer to the DST problem, we can use a similar technique to also show a matching lower bound for directed spanning tree enumeration:
An adversary can hide the entrance to a bi-directed path behind many edges in a clique.
For suitable clique- and path sizes this forces an algorithm to essentially fully explore the clique with many edges and few solution parts before being able to emit one of the many solution parts in the path.

\begin{theorem}
	\label{thm:dste-with-r-lower}
	Given is an unweighted directed graph $G$ that has a directed spanning tree rooted in a given vertex $r$.
	No algorithm can enumerate such a DST with preprocessing in $o((\maxoutdeg)^2)$ and delay in $o(\maxoutdeg)$, even if $\avgoutdeg \in o(\maxoutdeg)$.
\end{theorem}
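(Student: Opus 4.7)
My plan is to construct an adversary that hides the identity of the single clique vertex $a$ linking the clique to the path for $\Omega((\maxoutdeg)^2)$ queries, while showing that before this identity is fixed at most $\maxoutdeg$ DST edges can be emitted safely. Concretely, setting $k := \maxoutdeg$, the graph will consist of a bi-directed clique $C$ of $k+1$ vertices containing the root $r$, together with a bi-directed path $P$ of $L$ vertices $p_1, \dots, p_L$ with $L \in \omega(k)$, joined by a single directed edge $(a, p_1)$ from an as-yet-unchosen clique vertex $a$ to the first path vertex. To balance out-degrees within $C$ and make degree queries uninformative about $a$, I let $a$ miss exactly one clique out-edge (to some unspecified $c_{\mathrm{miss}}$) in place of its path edge, so that every clique vertex has out-degree $k$; choosing $L \in \omega(k)$ ensures $\avgoutdeg \in o(\maxoutdeg)$.

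The adversary will answer every neighbor query on a clique vertex with a previously unrevealed clique neighbor whenever this is still consistent with at least two remaining candidates for $a$, and answer path queries with path-internal edges without committing to which endpoint is $p_1$. Since a clique vertex can be ruled out as a candidate for $a$ only after all $k$ of its out-neighbors have been returned as clique vertices, a straightforward counting argument gives that at least two candidates for $a$ remain as long as fewer than some constant multiple of $k^2 \in \Omega((\maxoutdeg)^2)$ clique neighbor queries have been answered. The crucial structural claim I would then establish is that until $a$ is uniquely identified, no path-incident DST edge can be emitted safely: emitting $(a, p_1)$ obviously requires knowing $a$, while emitting any directed path edge $(p_i, p_{i+1})$ requires knowing the path's forward direction---but this direction is determined by which endpoint is $p_1$, and the adversary's freedom to reassign $a$ is exactly the freedom to swap the two degree-$1$ endpoints of $P$, reversing the forward direction. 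Hence the only emissions safe before $a$ is fixed are clique edges confirmed by a neighbor-query response, and because such emissions together must form a subset of some DST (hence a forest on $C$), at most $k$ of them can occur.

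Assembling the pieces, suppose for contradiction that an algorithm achieves preprocessing $P \in o((\maxoutdeg)^2)$ and delay $D \in o(\maxoutdeg)$. By the structural claim, the first path-incident emission occurs as the $j$-th output for some $j \le k + 1$, so the total time elapsed up to that emission is at most $P + (j-1) D \le P + k \cdot D \in o((\maxoutdeg)^2)$, contradicting the adversary's $\Omega((\maxoutdeg)^2)$ query requirement. The main obstacle will be making the safe-emission analysis precise---in particular, ruling out indirect ways the algorithm might fix the path direction without querying into the clique---which I intend to handle by exploiting the symmetry of $p_1$ and $p_L$ under the adversary's control of $a$'s identity, together with the observation that any in-information about $p_1$ can reach the algorithm only through an out-neighbor query of a clique vertex.
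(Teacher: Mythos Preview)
Your proposal is correct and follows essentially the same approach as the paper's proof: a bi-directed clique containing $r$ attached to a bi-directed path via a single out-edge from an undetermined clique vertex, with the adversary hiding this connecting edge behind $\Omega((\maxoutdeg)^2)$ neighbor queries and exploiting its remaining freedom over which path endpoint is reached to reverse the path's forward direction and invalidate any premature path-edge emission. The paper makes the minor simplification of deleting the connector's out-edge to $r$ itself rather than to an arbitrary $c_{\mathrm{miss}}$, which trims the adversary's bookkeeping (the missing edge can never be a DST edge anyway) but is not essential to the argument.
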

\begin{proof}
	We use  the adversarial setup from \Cref{subsec:lower-techniques} with $k$ to be chosen later.
	Consider a graph consisting of a bi-directed clique of $k$ vertices and a bi-directed path of $n-k$ vertices.
	The given root $r$ lies in the clique.
	One other clique vertex $c$ has no edge to $r$ and instead an edge to one of the two end-vertices $p$ or $q$ of the path.
	(See \Cref{fig:dste-with-r-lower} for an illustration.)
	Note that this graph has maximum degree $\maxoutdeg=k-1$.

	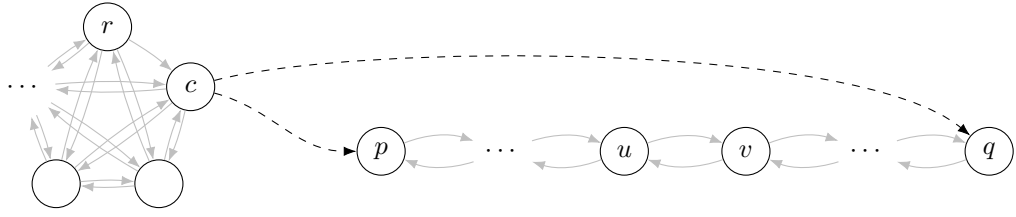
\begin{figure*}[t]
		\centering
		\begin{tikzpicture}[vertex/.append style = {minimum width=1.8em}]
			\foreach \v/\l in {1/$r$,3/,4/,5/$c$}{
				\node[vertex] (c\v) at (18+72*\v:1.15cm) {\l};
			}
			\node[vertex,draw=none] (c2) at (162:1.15cm) {$\cdots$};
			\foreach \u in {1,...,3}{
				\pgfmathtruncatemacro{\unext}{\u+1}
				\foreach \v in {\unext,...,4}{
					\draw[lightgray] (c\u) edge[directed,bend left=5] (c\v);
					\draw[lightgray] (c\v) edge[directed,bend left=5] (c\u);
				}
			}
			\foreach \v in {2,...,4}{
				\draw[lightgray] (c5) edge[directed,bend left=5] (c\v);
				\draw[lightgray] (c\v) edge[directed,bend left=5] (c5);
			}
			\draw[lightgray] (c1) edge[directed,bend left=5] (c5);
			
			\begin{scope}[xshift=2cm,yshift=-0.5cm,xscale=1.6]
				\foreach \l/\v in {$p$/1,$u$/3,$v$/4,$q$/6}{
					\node[vertex] (p\v) at (\v,0) {\l};
				}
				\foreach \v in {2,5}{
					\node[vertex,draw=none] (p\v) at (\v,0) {$\cdots$};
				}
				\foreach \i [evaluate=\i as \j using int(\i+1)] in {1, ..., 5}{
					\draw[lightgray] (p\i) edge[directed,bend left] (p\j);
					\draw[lightgray] (p\j) edge[directed,bend left] (p\i);
				}
			\end{scope}
			\draw (c5) edge[directed, out=-15, in=180, dashed] (p1);
			\draw (c5) edge[directed, out=15, in=140, looseness=0.5, dashed] (p6);
		\end{tikzpicture}
		\caption{The adversarial graph for \Cref{thm:dste-with-r-lower}. A DST enumeration algorithm cannot emit an edge from the bi-directed path before having found the outgoing edge from the clique.}
		\label{fig:dste-with-r-lower}
	\end{figure*}

	The adversary places each edge to $r$ and also the edge from $c$ to $p$ or $q$ last in the respective adjacency lists.
	Further, observe that for any degree query in the clique, the adversary answers $k-1$, so no algorithm can identify $c$ with degree queries.

	Assume, after less than $(k-1)^2$ queries to the adversary, an algorithm emits an edge $(u,v)$ among two path vertices or an edge that connects the clique to the path.
	Even if all the algorithm's queries are adjacency queries on clique vertices, the adversary can up to this point only answer with edges within the clique.
	Further, there is at least one vertex in the clique where the algorithm has not seen the whole neighborhood.
	The adversary chooses this vertex as $c$ and connects it either to $p$ or $q$, picking the endpoint of the path that makes the emitted edge wrong.

	Thus, any algorithm has to make at least $(k-1)^2$ queries before emitting a solution edge outside of the clique.
	As there can only be $k-1$ tree edges inside the clique, the adversarial setup shows a lower bound on emitting the $k$-th edge of $\Omega(\prefixtime)=\Omega(k^2)$.
	With $\maxoutdeg = k-1$, this shows a lower bound of $\Omega(k^2) = \Omega((\maxoutdeg)^2)$ for preprocessing or $\Omega(k) = \Omega(\maxoutdeg)$ for delay.

	It remains to pick $k$ to show the desired low average degree.
	The graph consists of $k \cdot (k-1) + 2 \cdot (n-k-1)$ edges, so for $k \in o(n)$ the average degree is $\avgoutdeg \in o(\maxoutdeg)$.
\end{proof}

\subsection{Directed Minimum Spanning Trees}
\label{sec:dmst}

Efficient implementations of \textit{Edmonds' algorithm} need little more than linear time to solve any instance completely \cite{gabowEfficientAlgorithmsFinding1986,mendelsonMeldingPriorityQueues2006}.
A simple adversary construction rules out essentially any meaningful enumeration, as any algorithm has to inspect almost all of the input graph as preprocessing:
Before emitting a first edge $(u,v)$ as part of an MDST for a bi-directed clique, an algorithm has to query $\Theta(m)$ adjacencies to find all edges incident to $v$ to make sure that the single spanning tree edge incident to $v$ is one of minimum weight.

\begin{restatable}{theorem}{thmmdstelower}
	\label{thm:mdste-lower}
	Minimum directed spanning tree enumeration cannot be solved with both preprocessing and delay in $o(m)$.
\end{restatable}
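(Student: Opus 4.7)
My plan is to run an adversary argument on a bi-directed clique, in the same spirit as \Cref{thm:dste-lower} but now forcing the algorithm to verify that its first emitted edge is the cheapest incoming edge of its target vertex. I take the input to be the bi-directed complete graph on $n$ vertices, so $m=n(n-1)$, and I let the adversary reveal every queried edge with weight~$2$ while reserving the right to retroactively place a single weight-$1$ edge.

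I fix the adversary's revelation order by assigning each vertex $w$ a cyclic permutation $\pi_w$ of $\{1,\dots,n\}\setminus\{w\}$ that determines the sequence of answers to neighbor queries on $w$. The property that matters for the lower bound is symmetric hiding: for every target $v$, the multiset $\{\pi_w^{-1}(v):w\neq v\}$ equals $\{1,2,\dots,n-1\}$. A cyclic $\pi_w$ guarantees this by design, so no target vertex is especially easy to expose. Degree queries can all be answered by $n-1$.

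When the algorithm emits its first edge $(u,v)$, the adversary searches for a source $w^\ast\notin\{u,v\}$ for which $(w^\ast,v)$ has not yet been revealed, commits $(w^\ast,v)$ to weight~$1$, and fixes all other unrevealed edges to weight~$2$. Then $(w^\ast,v)$ is the unique lightest incoming edge of $v$, and rooting the MDST at $v$ to avoid using any incoming edge of $v$ costs $2(n-1)>2n-3$, so every MDST of the resulting graph uses $(w^\ast,v)$ and therefore cannot contain $(u,v)$; the emission is refuted.

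The crux is to show the adversary retains this hiding freedom until $\Omega(m)$ queries have been made. Refutation fails only when every $(w,v)$ with $w\notin\{u,v\}$ has already been revealed; revealing $(w,v)$ costs at least $\pi_w^{-1}(v)$ queries on row~$w$, so the symmetric hiding property yields a total query count of at least $\binom{n}{2}-(n-1)=\binom{n-1}{2}\in\Omega(m)$, even if the algorithm picks $u$ cleverly to omit the row where $v$ is deepest in $\pi_u$. I expect the main subtlety to be this \emph{simultaneity}: the adversary has to preserve hiding for every potential target $v$ at once, not just for a single one, and the uniform cyclic design is what resolves this by making no column especially easy to fully expose. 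Everything else (consistency of the adversary's weight assignment, and the straightforward observation that any emission before the first solution part violates either preprocessing or delay) is routine.
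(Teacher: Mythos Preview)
Your proposal is correct and follows essentially the same adversary argument as the paper: a bi-directed clique in which all revealed edges have weight~$2$, with the adversary refuting any premature emission $(u,v)$ by assigning weight~$1$ to an unrevealed incoming edge $(w^\ast,v)$, so that every MDST must use $(w^\ast,v)$ rather than $(u,v)$. The only difference is the device used to guarantee that some such $(w^\ast,v)$ remains unrevealed after $o(m)$ queries regardless of which target~$v$ the algorithm picks---you use a cyclic (Latin-square) ordering of the adjacency lists, whereas the paper partitions the clique into two halves and lists same-half neighbors first; both are routine and yield the same $\Omega(m)$ bound.
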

\begin{proof}
	We again use the adversarial setup from \Cref{subsec:lower-techniques}.
	
	Consider a bi-directed clique of $n$ vertices, split into two groups of $\lceil \frac{n}{2} \rceil$ and $\lfloor \frac{n}{2} \rfloor$ vertices, respectively.
	All adjacency lists first specify all edges to vertices in the same group, followed by the edges to vertices of the other group.
	(See \Cref{fig:mdste-lower} for an illustration.)
	For the first $(\lfloor \frac{n}{2} \rfloor)^2$ edge queries, the adversary consistently reports an edge weight of 2.
	Let $v$ be any vertex from the first group and observe, that to find all incident edges $(w,v)$, any algorithm needs to ask at least $\lceil \frac{n}{2} \rceil-1$ neighbor queries to vertices from the first group and $(\lfloor \frac{n}{2} \rfloor)^2$ neighbor queries to vertices from the second group.
	(Similarly for any vertex $v$ from the second group.)
	Further, any degree query always returns the answer $n-1$.
	
	Assume now, an algorithm  emits  an edge $(u,v)$ as solution part after only $(\lfloor \frac{n}{2} \rfloor)^2$ steps.
	Then there is at least one edge $(w,v)$ with $w \neq u$ that the algorithm has not yet queried.
	The adversary sets the weight of this edge to 1 and all remaining edge weights to 2.
	Thus, any correct MDST of the graph must use $(w,v)$, and as $v$ must not be the endpoint of two edges, the solution part $(u,v)$ is wrong.
	Therefore, an algorithm has to make at least $\Omega((\lfloor \frac{n}{2} \rfloor)^2) = \Omega(m)$ queries before emitting the first solution part.
	
	Note that this proof applies to both the setting with and without given root vertex.
\end{proof}

\begin{figure}[t]
	\centering
	\begin{tikzpicture}[vertex/.append style = {minimum width=2em}]
		\clip (-1.4, -1.2) rectangle (3.4, 1.6);
		
		\node[vertex] (a1) at (110:1cm) {$v$};
		\node[vertex] (a2) at (180:1cm) {};
		\node[vertex,draw=none] (a3) at (250:1cm) {$\cdots$};
		\foreach \u in {1,2}{
			\pgfmathtruncatemacro{\unext}{\u+1}
			\foreach \v in {\unext,...,3}{
				\draw[lightgray] (a\u) edge[directed,bend left=5] (a\v);
				\draw[lightgray] (a\v) edge[directed,bend left=5] (a\u);
			}
		}
		\begin{scope}[xshift=2cm]
			\node[vertex] (b1) at (70:1cm) {$w$};
			\node[vertex] (b2) at (0:1cm) {};
			\node[vertex,draw=none] (b3) at (290:1cm) {$\cdots$};
			\foreach \u in {1,2}{
				\pgfmathtruncatemacro{\unext}{\u+1}
				\foreach \v in {\unext,...,3}{
					\draw[lightgray] (b\u) edge[directed,bend left=5] (b\v);
					\draw[lightgray] (b\v) edge[directed,bend left=5] (b\u);
				}
			}
		\end{scope}
		
		\foreach \u in {1,...,3}{
			\foreach \v in {1,...,3}{
				\ifthenelse{\u = 1 \and \v = 1}{
					\draw[dashed] (b\v) edge[directed,bend right=10] node[above] {$1$} (a\u);
				}{
					\draw[lightgray, dashed] (b\v) edge[directed,bend right=5] (a\u);
				}
				\draw[lightgray, dashed] (a\u) edge[directed,bend right=5] (b\v);
			}
		}
	\end{tikzpicture}
	\caption{The prototype graph for the adversary for the proof of \Cref{thm:mdste-lower}, where all gray edges have weight $2$. The adversary reveals the dashed edges as late as possible such that an algorithm has to query many edges before finding all edges incident to $v$; any of which could be the single correct edge to $v$ with weight $1$ in all MDSTs.}
	\label{fig:mdste-lower}
\end{figure}
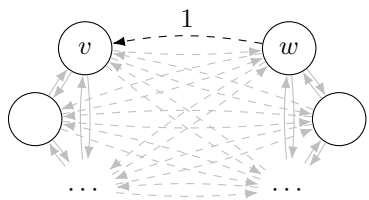

\section{Experimental Evaluation}
\label{sec:experiments}

We implemented MST algorithms in Rust and executed the experiments on a compute server with 256 GB RAM and an Intel Xeon Silver 4314 CPU with 2.40 GHz.
More specifically, we compared the total time MST algorithms by Prim \cite{primShortestConnectionNetworks1957} (with a binary heap \cite{sekineBinary_heap_plus2022}), Kruskal \cite{kruskalShortestSpanningSubtree1956} (with union-find with union-by-rank \cite{cormenIntroductionAlgorithms2022} and path-halfing \cite{vanderweideDatastructuresAxiomaticApproach1980}), and Boruvka \cite{boruvkaJistemProblemuMinimalnim1926}, with our enumeration approach of \Cref{thm:mst-algo-upper} with each of the other three algorithms as black box.
Additionally, as Prim's algorithm already fixes edges one at a time, we re-interpreted it as enumeration algorithm; meaning that whenever the algorithm fixes an MST edge, the algorithm is interrupted and the edge is immediately emitted as solution part.
Note that for the enumeration variants we did not implement holding back solution parts in a queue, but instead emitted each solution edge as soon as it was fixed.

For the evaluation we generated random graphs in the $G(n,p)$ model \cite{gilbertRandomGraphs1959} with the number $n$ of vertices ranging from $100$ to $200\,000$ and an edge probability $p$ between $n^{-0.75}$ and $0.25$.
For each input size we ran the algorithms on 10~random instances with 5~runs each and took the average time in nanoseconds for, among others, three values:
\begin{itemize}
	\item The time until the algorithm produced a first solution part;
	\item the maximum so-called \textit{incremental} delay, meaning the maximum of $\frac{\text{elapsed time}}{\text{number of emitted parts}}$;
	\item the total time of the run.
\end{itemize}
Note that while inspecting the worst case delay gives stronger bounds in the theoretical analysis, looking at incremental delay is the more reasonable measurement in practice.
This avoids computational overhead for holding back solution parts and still guarantees the availability of the $k$th part after $k \cdot \textit{incremental delay}$ time.

For the generated instances, Prim's total-time algorithm and its enumeration variant consistently performed best among their respective algorithm categories.
\Cref{fig:prim-experiments} shows the direct comparison of (the re-interpreted) Prim's algorithm with the enumeration approach of \Cref{thm:mst-algo-upper} with Prim's algorithm as black box.
As expected, the additional effort for enumerating solution parts early leads to a larger total time.
However, the enumeration algorithm produces the first solution part one order of magnitude faster and also has the benefit of a smaller incremental delay.
This confirms that the theoretical advantage of the enumeration variant transfers to practice if the time to a first solution part and/or the delay are of more importance than the total time.

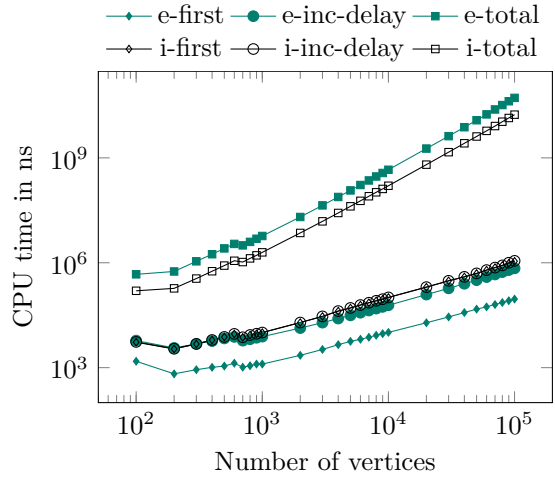
\begin{figure}[t]
	\centering
	\begin{tikzpicture}
		\begin{loglogaxis}[
			width=215pt,
			height=170pt,
			xlabel={Number of vertices},
			ylabel={CPU time in ns},
			ymin=100,
			ymax=3e11,
			legend entries={
				e-first,
				e-inc-delay,
				e-total,
				i-first,
				i-inc-delay,
				i-total,
			},
			legend style={at={(0.5,1)},anchor=south,draw=none},
			legend columns=3,
		]
			\addplot[plotcolor, mark=diamond*, mark size=1.4] table
				[x=size,y=first_output_avg, col sep=comma] {./experiment_data/aggregated_0.25.enum-prim.csv};
			\addplot[plotcolor, mark=*] table
				[x=size,y=delay_inc_max_avg, col sep=comma] {./experiment_data/aggregated_0.25.enum-prim.csv};
			\addplot[plotcolor, mark=square*, mark size=1.4] table
				[x=size,y=total_time_avg, col sep=comma] {./experiment_data/aggregated_0.25.enum-prim.csv};

			\addplot[black, mark=diamond, mark size=1.4] table
				[x=size,y=first_output_avg, col sep=comma] {./experiment_data/aggregated_0.25.incremental-prim.csv};
			\addplot[black, mark=o] table
				[x=size,y=delay_inc_max_avg, col sep=comma] {./experiment_data/aggregated_0.25.incremental-prim.csv};
			\addplot[black, mark=square, mark size=1.4] table
				[x=size,y=total_time_avg, col sep=comma] {./experiment_data/aggregated_0.25.incremental-prim.csv};

			\addplot[draw=none, name path=e-first-min] table
				[x=size,y=first_output_lower_quartile, col sep=comma] {./experiment_data/aggregated_0.25.enum-prim.csv};
			\addplot[draw=none, name path=e-first-max] table
				[x=size,y=first_output_upper_quartile, col sep=comma] {./experiment_data/aggregated_0.25.enum-prim.csv};
			%\addplot[plotcolor, fill opacity=0.3] fill between [of=e-first-min and e-first-max];

			\addplot[draw=none, name path=e-inc-delay-min] table
				[x=size,y=delay_inc_max_lower_quartile, col sep=comma] {./experiment_data/aggregated_0.25.enum-prim.csv};
			\addplot[draw=none, name path=e-inc-delay-max] table
				[x=size,y=delay_inc_max_upper_quartile, col sep=comma] {./experiment_data/aggregated_0.25.enum-prim.csv};
			%\addplot[plotcolor, fill opacity=0.3] fill between [of=e-inc-delay-min and e-inc-delay-max];

			\addplot[draw=none, name path=e-total-min] table
				[x=size,y=total_time_lower_quartile, col sep=comma] {./experiment_data/aggregated_0.25.enum-prim.csv};
			\addplot[draw=none, name path=e-total-max] table
				[x=size,y=total_time_upper_quartile, col sep=comma] {./experiment_data/aggregated_0.25.enum-prim.csv};
			%\addplot[plotcolor, fill opacity=0.3] fill between [of=e-total-min and e-total-max];

			\addplot[draw=none, name path=i-first-min] table
				[x=size,y=first_output_lower_quartile, col sep=comma] {./experiment_data/aggregated_0.25.incremental-prim.csv};
			\addplot[draw=none, name path=i-first-max] table
				[x=size,y=first_output_upper_quartile, col sep=comma] {./experiment_data/aggregated_0.25.incremental-prim.csv};
			%\addplot[black, fill opacity=0.3] fill between [of=i-first-min and i-first-max];

			\addplot[draw=none, name path=i-inc-delay-min] table
				[x=size,y=delay_inc_max_lower_quartile, col sep=comma] {./experiment_data/aggregated_0.25.incremental-prim.csv};
			\addplot[draw=none, name path=i-inc-delay-max] table
				[x=size,y=delay_inc_max_upper_quartile, col sep=comma] {./experiment_data/aggregated_0.25.incremental-prim.csv};
			%\addplot[black, fill opacity=0.3] fill between [of=i-inc-delay-min and i-inc-delay-max];

			\addplot[draw=none, name path=i-total-min] table
				[x=size,y=total_time_lower_quartile, col sep=comma] {./experiment_data/aggregated_0.25.incremental-prim.csv};
			\addplot[draw=none, name path=i-total-max] table
				[x=size,y=total_time_upper_quartile, col sep=comma] {./experiment_data/aggregated_0.25.incremental-prim.csv};
			%\addplot[black, fill opacity=0.3] fill between [of=i-total-min and i-total-max];
		\end{loglogaxis}
	\end{tikzpicture}
	\caption{Time to first output, incremental delay and total time for (e) MST enumeration with Prim's algorithm as black box and (i) MST computation with Prim's algorithm, interrupted at each fixed solution part.}
	\label{fig:prim-experiments}
\end{figure}

\section{Conclusions and Future Work}
\label{sec:future-work}

In this work we proved upper and lower bounds on the required preprocessing and delay for enumerating the edges of (minimum) (directed) spanning trees.
While several of our theoretical results are already tight, there still are gaps to be closed (cf. \Cref{fig:mst-summary}).

An important next step is to incorporate more real world requirements in the model:
Most networks admit for many different MSTs, and for some problems, not all of them work equally well~\cite{agmonGivingTreeConstructing2008}.
Also, the order in which tree edges are emitted can be relevant to efficiently interleave MST computation with following processing steps.
Thus, to bring our theoretical results closer to practical application, additional requirements on the produced output need to be investigated with regard to the possibility of enumerating solution parts with little preprocessing and delay.

%%
%% Bibliography
%%

\bibliography{spanning-tree-enumeration}

\begin{thebibliography}{10}

\bibitem{agmonGivingTreeConstructing2008}
Noa Agmon, Noam Hazon, Gal~A. Kaminka, and {The MAVERICK Group}.
\newblock The giving tree: Constructing trees for efficient offline and online
  multi-robot coverage.
\newblock {\em Annals of Mathematics and Artificial Intelligence},
  52(2):143--168, April 2008.
\newblock \href {https://doi.org/10.1007/s10472-009-9121-1}
  {\path{doi:10.1007/s10472-009-9121-1}}.

\bibitem{bergOnlineMinimumSpanning2023}
Magnus Berg, Joan Boyar, Lene~M. Favrholdt, and Kim~S. Larsen.
\newblock Online {{Minimum Spanning Trees}} with {{Weight Predictions}}.
\newblock In Pat Morin and Subhash Suri, editors, {\em Algorithms and {{Data
  Structures}} ({{WADS}})}, pages 136--148, Cham, 2023. Springer Nature
  Switzerland.
\newblock \href {https://doi.org/10.1007/978-3-031-38906-1_10}
  {\path{doi:10.1007/978-3-031-38906-1_10}}.

\bibitem{bockAlgorithmConstructMinimum1971}
F.~Bock.
\newblock An algorithm to construct a minimum directed spanning tree in a
  directed network.
\newblock In {\em Developments in {{Operations Research}}}, pages 29--44.
  {Gordon and Breach}, New York, 1971.

\bibitem{boruvkaJistemProblemuMinimalnim1926}
Otakar Bor{\r u}vka.
\newblock {O jist{\'e}m probl{\'e}mu minim{\'a}ln{\'i}m}.
\newblock {\em Pr{\'a}ce Moravsk{\'e} p{\v r}{\'i}rodov{\v e}deck{\'e} spole{\v
  c}nosti}, 3(3):37--58, 1926.

\bibitem{cameriniNoteFindingOptimum1979}
Paolo~M. Camerini, Luigi Fratta, and Francesco Maffioli.
\newblock A note on finding optimum branchings.
\newblock {\em Networks}, 9(4):309--312, 1979.
\newblock \href {https://doi.org/10.1002/net.3230090403}
  {\path{doi:10.1002/net.3230090403}}.

\bibitem{carmeliAnsweringUnionsConjunctive2022}
Nofar Carmeli, Shai Zeevi, Christoph Berkholz, Alessio Conte, Benny Kimelfeld,
  and Nicole Schweikardt.
\newblock Answering ({{Unions}} of) {{Conjunctive Queries}} using {{Random
  Access}} and {{Random-Order Enumeration}}.
\newblock {\em ACM Transactions on Database Systems}, 47(3):9:1--9:49, August
  2022.
\newblock \href {https://doi.org/10.1145/3531055} {\path{doi:10.1145/3531055}}.

\bibitem{caselShortestDistancesEnumeration2024}
Katrin Casel, Tobias Friedrich, Stefan Neubert, and Markus~L. Schmid.
\newblock Shortest distances as enumeration problem.
\newblock {\em Discrete Applied Mathematics}, 342:89--103, January 2024.
\newblock \href {https://doi.org/10.1016/j.dam.2023.08.027}
  {\path{doi:10.1016/j.dam.2023.08.027}}.

\bibitem{chazelleMinimumSpanningTree2000}
Bernard Chazelle.
\newblock A minimum spanning tree algorithm with inverse-{{Ackermann}} type
  complexity.
\newblock {\em Journal of the ACM}, 47(6):1028--1047, November 2000.
\newblock \href {https://doi.org/10.1145/355541.355562}
  {\path{doi:10.1145/355541.355562}}.

\bibitem{chuShortestArborescenceDirected1965}
Yeong-Jin Chu and Tseng-Hong Liu.
\newblock On the {{Shortest Arborescence}} of a {{Directed Graph}}.
\newblock {\em Scientia Sinica}, 14:1396--1400, 1965.

\bibitem{cormenIntroductionAlgorithms2022}
Thomas~H. Cormen, Charles~Eric Leiserson, Ronald~L. Rivest, and Clifford Stein.
\newblock {\em Introduction to {{Algorithms}}}.
\newblock The MIT Press, Cambridge, Massachusett, 4. edition, 2022.

\bibitem{dijkstraNoteTwoProblems1959}
Edsger~W. Dijkstra.
\newblock A note on two problems in connexion with graphs.
\newblock {\em Numerische Mathematik}, 1(1):269--271, December 1959.
\newblock \href {https://doi.org/10.1007/BF01386390}
  {\path{doi:10.1007/BF01386390}}.

\bibitem{edmondsOptimumBranchings1967}
Jack Edmonds.
\newblock Optimum branchings.
\newblock {\em Journal of Research of the National Bureau of Standards Section
  B Mathematics and Mathematical Physics}, 71B(4):233, October 1967.
\newblock \href {https://doi.org/10.6028/jres.071B.032}
  {\path{doi:10.6028/jres.071B.032}}.

\bibitem{ericksonAlgorithms2019}
Jeff Erickson.
\newblock {\em Algorithms}.
\newblock self-published, Illinois, 2019.

\bibitem{felzenszwalbEfficientGraphBasedImage2004}
Pedro~F. Felzenszwalb and Daniel~P. Huttenlocher.
\newblock Efficient {{Graph-Based Image Segmentation}}.
\newblock {\em International Journal of Computer Vision}, 59(2):167--181,
  September 2004.
\newblock \href {https://doi.org/10.1023/B:VISI.0000022288.19776.77}
  {\path{doi:10.1023/B:VISI.0000022288.19776.77}}.

\bibitem{fredmanFibonacciHeapsTheir1987}
Michael~L. Fredman and Robert~E. Tarjan.
\newblock Fibonacci {{Heaps}} and {{Their Uses}} in {{Improved Network
  Optimization Algorithms}}.
\newblock {\em Journal of the ACM}, 34(3):596--615, July 1987.
\newblock \href {https://doi.org/10.1145/28869.28874}
  {\path{doi:10.1145/28869.28874}}.

\bibitem{fredmanTransdichotomousAlgorithmsMinimum1994}
Michael~L. Fredman and Dan~E. Willard.
\newblock Trans-dichotomous algorithms for minimum spanning trees and shortest
  paths.
\newblock {\em Journal of Computer and System Sciences}, 48(3):533--551, June
  1994.
\newblock \href {https://doi.org/10.1016/S0022-0000(05)80064-9}
  {\path{doi:10.1016/S0022-0000(05)80064-9}}.

\bibitem{gabowEfficientAlgorithmsFinding1986}
Harold~N. Gabow, Zvi Galil, Thomas Spencer, and Robert~E. Tarjan.
\newblock Efficient algorithms for finding minimum spanning trees in undirected
  and directed graphs.
\newblock {\em Combinatorica}, 6(2):109--122, June 1986.
\newblock \href {https://doi.org/10.1007/BF02579168}
  {\path{doi:10.1007/BF02579168}}.

\bibitem{gagolewskiClusteringMinimumSpanning2024}
Marek Gagolewski, Anna Cena, Maciej Bartoszuk, and {\L}ukasz Brzozowski.
\newblock Clustering with {{Minimum Spanning Trees}}: {{How Good Can It Be}}?
\newblock {\em Journal of Classification}, July 2024.
\newblock \href {https://doi.org/10.1007/s00357-024-09483-1}
  {\path{doi:10.1007/s00357-024-09483-1}}.

\bibitem{gilbertRandomGraphs1959}
Edgar~Nelson Gilbert.
\newblock Random {{Graphs}}.
\newblock {\em The Annals of Mathematical Statistics}, 30(4):1141--1144,
  December 1959.
\newblock \href {https://doi.org/10.1214/aoms/1177706098}
  {\path{doi:10.1214/aoms/1177706098}}.

\bibitem{goldbergEfficientAlgorithmsListing1991}
Leslie~Ann Goldberg.
\newblock {\em Efficient Algorithms for Listing Combinatorial Structures}.
\newblock PhD thesis, University of Edinburgh, Edinburgh, 1991.

\bibitem{grahamHistoryMinimumSpanning1985}
Ronald~L. Graham and Pavol Hell.
\newblock On the {{History}} of the {{Minimum Spanning Tree Problem}}.
\newblock {\em IEEE Annals of the History of Computing}, 7(1):43--57, 1985.
\newblock \href {https://doi.org/10.1109/MAHC.1985.10011}
  {\path{doi:10.1109/MAHC.1985.10011}}.

\bibitem{holmPolylogarithmicDeterministicFullydynamic2001}
Jacob Holm, Kristian {de Lichtenberg}, and Mikkel Thorup.
\newblock Poly-logarithmic deterministic fully-dynamic algorithms for
  connectivity, minimum spanning tree, 2-edge, and biconnectivity.
\newblock {\em Journal of the ACM}, 48(4):723--760, July 2001.
\newblock \href {https://doi.org/10.1145/502090.502095}
  {\path{doi:10.1145/502090.502095}}.

\bibitem{itoComplexityReconfigurationProblems2011}
Takehiro Ito, Erik~D. Demaine, Nicholas~J.A. Harvey, Christos~H. Papadimitriou,
  Martha Sideri, Ryuhei Uehara, and Yushi Uno.
\newblock On the complexity of reconfiguration problems.
\newblock {\em Theoretical Computer Science}, 412(12-14):1054--1065, March
  2011.
\newblock \href {https://doi.org/10.1016/j.tcs.2010.12.005}
  {\path{doi:10.1016/j.tcs.2010.12.005}}.

\bibitem{jarnikJistemProblemuMinimalnim1930}
Vojt{\v e}ch Jarn{\'i}k.
\newblock {O jist{\'e}m probl{\'e}mu minim{\'a}ln{\'i}m. (Z dopisu panu O.
  Bor{\r u}vkovi)}.
\newblock {\em Pr{\'a}ce moravsk{\'e} p{\v r}{\'i}rodov{\v e}deck{\'e} spole{\v
  c}nosti}, 6(4):57--63, 1930.

\bibitem{karpSimpleDerivationEdmonds1971}
Richard~M. Karp.
\newblock A simple derivation of edmonds' algorithm for optimum branchings.
\newblock {\em Networks}, 1(3):265--272, 1971.
\newblock \href {https://doi.org/10.1002/net.3230010305}
  {\path{doi:10.1002/net.3230010305}}.

\bibitem{kruskalShortestSpanningSubtree1956}
Joseph~B. Kruskal.
\newblock On the {{Shortest Spanning Subtree}} of a {{Graph}} and the
  {{Traveling Salesman Problem}}.
\newblock {\em Proceedings of the American Mathematical Society}, 7(1):48--50,
  1956.
\newblock \href {https://arxiv.org/abs/2033241} {\path{arXiv:2033241}}, \href
  {https://doi.org/10.2307/2033241} {\path{doi:10.2307/2033241}}.

\bibitem{lindnerHiLiveRealtimeMapping2017}
Martin~S. Lindner, Benjamin Strauch, Jakob~M. Schulze, Simon~H. Tausch,
  Piotr~W. Dabrowski, Andreas Nitsche, and Bernhard~Y. Renard.
\newblock {{HiLive}}: Real-time mapping of illumina reads while sequencing.
\newblock {\em Bioinformatics}, 33(6):917--319, March 2017.
\newblock \href {https://doi.org/10.1093/bioinformatics/btw659}
  {\path{doi:10.1093/bioinformatics/btw659}}.

\bibitem{megowPowerRecourseOnline2012}
Nicole Megow, Martin Skutella, Jos{\'e} Verschae, and Andreas Wiese.
\newblock The {{Power}} of {{Recourse}} for {{Online MST}} and {{TSP}}.
\newblock In Artur Czumaj, Kurt Mehlhorn, Andrew Pitts, and Roger Wattenhofer,
  editors, {\em Automata, {{Languages}}, and {{Programming}}}, pages 689--700,
  Berlin, Heidelberg, 2012. Springer.
\newblock \href {https://doi.org/10.1007/978-3-642-31594-7_58}
  {\path{doi:10.1007/978-3-642-31594-7_58}}.

\bibitem{mendelsonMeldingPriorityQueues2006}
Ran Mendelson, Robert~E. Tarjan, Mikkel Thorup, and Uri Zwick.
\newblock Melding priority queues.
\newblock {\em ACM Transactions on Algorithms}, 2(4):535--556, October 2006.
\newblock \href {https://doi.org/10.1145/1198513.1198517}
  {\path{doi:10.1145/1198513.1198517}}.

\bibitem{navarroSortingHeapsMinimum2010}
Gonzalo Navarro and Rodrigo Paredes.
\newblock On {{Sorting}}, {{Heaps}}, and {{Minimum Spanning Trees}}.
\newblock {\em Algorithmica}, 57(4):585--620, August 2010.
\newblock \href {https://doi.org/10.1007/s00453-010-9400-6}
  {\path{doi:10.1007/s00453-010-9400-6}}.

\bibitem{nesetrilFewRemarksHistory1997}
Jaroslav Ne{\v s}et{\v r}il.
\newblock A few remarks on the history of {{MST-problem}}.
\newblock {\em Archivum Mathematicum}, 33(1):15--22, 1997.

\bibitem{neubertIncrementalOrderingScheduling2024}
Stefan Neubert and Katrin Casel.
\newblock Incremental {{Ordering}} for {{Scheduling Problems}}.
\newblock In {\em Proceedings of the {{International Conference}} on
  {{Automated Planning}} and {{Scheduling}}}, volume~34, pages 405--413,
  Washington, DC, USA, May 2024. AAAI Press.
\newblock \href {https://doi.org/10.1609/icaps.v34i1.31500}
  {\path{doi:10.1609/icaps.v34i1.31500}}.

\bibitem{osipovFilterKruskalMinimumSpanning2009}
Vitaly Osipov, Peter Sanders, and Johannes Singler.
\newblock The {{Filter-Kruskal Minimum Spanning Tree Algorithm}}.
\newblock In {\em Proceedings of the {{Workshop}} on {{Algorithm Engineering}}
  and {{Experiments}} ({{ALENEX}})}, Proceedings, pages 52--61, Philadelphia,
  PA, 2009. {Society for Industrial and Applied Mathematics}.
\newblock \href {https://doi.org/10.1137/1.9781611972894.5}
  {\path{doi:10.1137/1.9781611972894.5}}.

\bibitem{paredesOptimalIncrementalSorting2006}
Rodrigo Paredes and Gonzalo Navarro.
\newblock Optimal {{Incremental Sorting}}.
\newblock In {\em Proceedings of the {{Workshop}} on {{Algorithm Engineering}}
  and {{Experiments}} ({{ALENEX}})}, Proceedings, pages 171--182, Philadelphia,
  PA, 2006. {Society for Industrial and Applied Mathematics}.
\newblock \href {https://doi.org/10.1137/1.9781611972863}
  {\path{doi:10.1137/1.9781611972863}}.

\bibitem{pettieOptimalMinimumSpanning2002}
Seth Pettie and Vijaya Ramachandran.
\newblock An optimal minimum spanning tree algorithm.
\newblock {\em Journal of the ACM}, 49(1):16--34, January 2002.
\newblock \href {https://doi.org/10.1145/505241.505243}
  {\path{doi:10.1145/505241.505243}}.

\bibitem{primShortestConnectionNetworks1957}
Robert~C. Prim.
\newblock Shortest connection networks and some generalizations.
\newblock {\em The Bell System Technical Journal}, 36(6):1389--1401, November
  1957.
\newblock \href {https://doi.org/10.1002/j.1538-7305.1957.tb01515.x}
  {\path{doi:10.1002/j.1538-7305.1957.tb01515.x}}.

\bibitem{sekineBinary_heap_plus2022}
Hideki Sekine.
\newblock Binary\_heap\_plus, September 2022.

\bibitem{strozeckiEnumerationComplexity2019}
Yann Strozecki.
\newblock Enumeration {{Complexity}}.
\newblock {\em Bulletin of EATCS}, 3(129):33, October 2019.

\bibitem{tanComprehensiveReviewCoverage2021}
Chee~Sheng Tan, Rosmiwati {Mohd-Mokhtar}, and Mohd~Rizal Arshad.
\newblock A {{Comprehensive Review}} of {{Coverage Path Planning}} in
  {{Robotics Using Classical}} and {{Heuristic Algorithms}}.
\newblock {\em IEEE Access}, 9:119310--119342, 2021.
\newblock \href {https://doi.org/10.1109/ACCESS.2021.3108177}
  {\path{doi:10.1109/ACCESS.2021.3108177}}.

\bibitem{tarjanDepthFirstSearchLinear1972}
Robert~E. Tarjan.
\newblock Depth-{{First Search}} and {{Linear Graph Algorithms}}.
\newblock {\em SIAM Journal on Computing}, 1(2):146--160, June 1972.
\newblock \href {https://doi.org/10.1137/0201010} {\path{doi:10.1137/0201010}}.

\bibitem{tarjanFindingOptimumBranchings1977}
Robert~E. Tarjan.
\newblock Finding optimum branchings.
\newblock {\em Networks}, 7(1):25--35, 1977.
\newblock \href {https://doi.org/10.1002/net.3230070103}
  {\path{doi:10.1002/net.3230070103}}.

\bibitem{tarjanAmortizedComputationalComplexity1985}
Robert~E. Tarjan.
\newblock Amortized {{Computational Complexity}}.
\newblock {\em SIAM Journal on Algebraic Discrete Methods}, 6(2):306--318,
  April 1985.
\newblock \href {https://doi.org/10.1137/0606031} {\path{doi:10.1137/0606031}}.

\bibitem{vanderweideDatastructuresAxiomaticApproach1980}
Theodorus~Petrus {van der Weide}.
\newblock {\em Datastructures: {{An Axiomatic Approach}} and the {{Use}} of
  {{Binomial Trees}} in {{Developing}} and {{Analyzing Algorithms}}}.
\newblock PhD thesis, Mathematisch Centrum, Amsterdam, 1980.

\bibitem{zahnGraphTheoreticalMethodsDetecting1971}
Charles~T. Zahn.
\newblock Graph-{{Theoretical Methods}} for {{Detecting}} and {{Describing
  Gestalt Clusters}}.
\newblock {\em IEEE Transactions on Computers}, C-20(1):68--86, January 1971.
\newblock \href {https://doi.org/10.1109/T-C.1971.223083}
  {\path{doi:10.1109/T-C.1971.223083}}.

\bibitem{zhangPlanningExecutionMultiAgent2024}
Yue Zhang, Zhe Chen, Daniel Harabor, Pierre~Le Bodic, and Peter~J. Stuckey.
\newblock Planning and {{Execution}} in {{Multi-Agent Path Finding}}:
  {{Models}} and {{Algorithms}}.
\newblock {\em Proceedings of the International Conference on Automated
  Planning and Scheduling}, 34:707--715, May 2024.
\newblock \href {https://doi.org/10.1609/icaps.v34i1.31534}
  {\path{doi:10.1609/icaps.v34i1.31534}}.

\end{thebibliography}

\end{document}